\newtheorem{theorem}{Theorem}
\newtheorem{lemma}[theorem]{Lemma}
\newtheorem{remark}[theorem]{Remark}
\newtheorem{definition}[theorem]{Definition}
\numberwithin{equation}{section}
\newcommand{\nn}{\nonumber}
\newcommand{\C}{\mathbb{C}}
\newcommand{\R}{\mathbb{R}}
\newcommand{\cF}{\mathcal{F}}
\newcommand{\cQ}{\mathcal{Q}}
\newcommand{\lam}{{\lambda}}           % Frequently used abbreviations   %
\newcommand{\Einfty}{E(\infty)}
\newcommand{\1}{{\bf 1}}
\newcommand{\DETAILS}[1]{}
\newcommand{\supp}{\operatorname{supp}}
\newcommand{\Ran }{\operatorname{Ran}}
\renewcommand{\Re}{\mathrm{Re}}
\newcommand{\pp}{\phi_1 \otimes \phi_2}
\newcommand{\ppy}{\phi_1\otimes\phi_{2,\di}}
\newcommand{\phipsi}[2]{\varphi_{n_{#1}}\otimes\psi_{m_{#2}}}
\newcommand{\Psinm}[2]{\Psi^{n_{#1}m_{#2}}}
\newcommand{\Od}{O_d}
\newcommand{\Odi}{\Od^\infty}
\newcommand{\groundst}{\zeta}
\newcommand{\di}{r}
\newcommand{\chix}{\chi}
\newcommand{\chiz}{\widetilde{\chi}}
\newcommand{\Ix}{I}
\newcommand{\Iz}{\widetilde{I}_\di}
\newcommand{\Izij}{\widetilde{I}_{ij,\di}}
\newcommand{\Rx}{R}
\newcommand{\Rrx}{\Rx}
\newcommand{\Rz}{\widetilde{R}}
\newcommand{\psix}{\psi}
\newcommand{\psiz}{\widetilde{\psi}}
\newcommand{\Hx}{H}
\newcommand{\Hxabot}{H^{a,\bot}}
\newcommand{\Hz}{\widetilde{H}}
\newcommand{\Px}{P}
\newcommand{\Pxbot}{\Px^\bot}
\newcommand{\Pz}{\widetilde{P}}
\newcommand{\Pzbot}{\Pz^\bot}
\newcommand{\gsi}{\zeta_1}
\newcommand{\gsii}{\zeta_2}
\newcommand{\YN}{Y}
\newcommand{\Imp}{G}
\newcommand{\subi}{\mathcal{S}_1}
\newcommand{\subj}{\mathcal{S}_2}
\newcommand{\subij}{\subi\times\subj}
\newcommand{\Chix}{X}
\newcommand{\fr}{f_\di}
\newcommand{\frz}{\widetilde{\fr}}
\newcommand{\Tspin}{\widetilde{T}_\pi}
\title[Van der Waals interaction between two atoms]{Differentiability of the van der Waals interaction between two atoms}
\author[I. Anapolitanos]{Ioannis Anapolitanos}
\address{Ioannis Anapolitanos, Dept.~of Math., Karlsruhe Institute of Technology, Karlsruhe, Germany} 
\email{ioannis.anapolitanos@kit.edu}
\author[M. Lewin]{Mathieu Lewin}
\address{Mathieu Lewin, CNRS \& CEREMADE, University Paris-Dauphine, PSL University, 75016 Paris, France} 
\email{mathieu.lewin@math.cnrs.fr}
\author[M. Roth]{Matthias Roth}
\address{Matthias Roth, Dept.~of Math., Karlsruhe Institute of Technology, Karlsruhe, Germany} 
\email{matthias.roth@kit.edu}
\begin{document}

\begin{abstract}
In this work we improve upon previous results on the expansion of the interaction energy of two atoms. On the one hand we prove the van der Waals-London's law, assuming that only one of the ground state eigenspaces of the atoms is irreducible in an appropriate sense. On the other hand we prove strict monotonicity of the interaction energy at large distances and, under more restrictive assumptions, we provide the leading order of its first two derivatives. The first derivative is interpreted as the force in Physics. Moreover, the estimates of the first two derivatives provide a rigorous proof of the monotonicity and concavity of the interaction energy at large distances. 
\end{abstract}

\maketitle

\section{Introduction and main result}

Atoms and molecules attract each other through van der Waals forces, which are much weaker than ionic or covalent bonds. These forces have been discovered by J.D. van der Waals~\cite{vdW1,vdW2}
when he was trying to formulate an equation of state of gases, compatible with experimental measurements. These forces are universal and play a fundamental role in quantum chemistry, physics, biology and material
sciences. For instance, their strength is one of the factors that determine the boiling temperature of liquids. 
They explain why diamond, consisting of carbon atoms connected with
covalent bonds only, is a much harder material than graphite, which
consists of layers of carbon atoms that attract each other through van der Waals forces~\cite{C}.

Our goal in this paper is to discuss the differentiability of the interaction energy of two atoms at dissociation and the long range behavior of its first two derivatives, justifying thereby the long range behavior of the van der Waals force. We work under the Born-Oppenheimer approximation, in which the two nuclei are classical particles and the electrons are quantum.

We begin with a mathematical formulation of the problem. We study the Hamiltonian
\begin{equation}\label{Hdec}
\Hx=\Hx(\di)= \Hx_0 +\Ix,\qquad 
\Hx_0 = \Hx_1+\Hx_{2,\di},
\end{equation}
where
\begin{equation*}
\label{def:H1H2}
\Hx_1=\sum_{i=1}^{N_1}\left(-\Delta_{x_i}-\frac{N_1}{|x_i|}\right) + \sum_{1\leq k < l \leq N_1}\frac{1}{|x_k - x_l|},
\end{equation*}
\begin{equation*}
\label{def:H2}
\Hx_{2,\di}=\sum_{j=N_1+1}^{N_1+N_2}\left(-\Delta_{x_j}-\frac{N_2}{|x_j-\di e_1|}\right) + \sum_{N_1+1\leq m < n \leq N_1+N_2}\frac{1}{|x_m - x_n|}
\end{equation*}
and
\begin{equation*}
\label{def:Ixij}
\Ix = \sum_{i=1}^{N_1}\sum_{j=N_1+1}^{N_1+N_2}\Ix_{ij}, \qquad
\Ix_{ij}=\frac{1}{\di}-\frac{1}{|x_i-\di e_1|}-\frac{1}{|x_j|}+\frac{1}{|x_i-x_j|}.
\end{equation*}
%\begin{equation}
%I=\frac{1}{\di}+\frac{1}{|\di-z_1+z_2|}-\frac{1}{|\di-z_1|}-\frac%{1}{|\di+z_2|}.
%\end{equation}
Here we assume that the first nucleus is at $0 \in \R^3$ and the second at $\di e_1,$ where  $e_1=(1,0,0)$ and $\di>0$ denotes the distance between the nuclei. The coordinates $x_1,\dots,x_{N_1+N_2} \in \R^3$,
are the locations of the electrons, where $N_1,N_2$ are the atomic numbers of the first and second atom, respectively. The operators $\Hx_1,\Hx_{2,\di}$ describe the individual atoms, and the multiplication operator $\Ix$ is the interaction between them. The notation $\Hx_{2,\di}$ emphasizes the fact that the nucleus of the second atom is located at a distance $\di$ from the origin. We work in units where the electron charge is $-1$, the electron mass is $0.5$ and Planck's constant is $\hbar=1$.

In order to take spin and the Fermi statistics into account, we introduce some orthogonal projections onto appropriate subspaces of
\begin{equation}\label{def:YN1N2}
\YN =L^2\bigg(\big(\mathbb{R}^3 \times \{0,...,q-1\}\big)^{N_1+N_2},\C\bigg)\simeq \bigotimes_{1}^{N_1+N_2}L^2\big(\mathbb{R}^3,\C^q\big).
\end{equation}
Here $q$ is the number of spin states, hence $q=2$ for electrons.
The corresponding norm is
$$\|\Phi\|_Y^2:= \sum_{s_j\in\{0,...,q-1\}}\int |\Phi(x_1,s_1,...,x_{N_1+N_2}, s_{N_1+N_2})|^2 dx_1 \cdots dx_{N_1+N_2}.$$
Sometimes we write $\int ds$ for the sum over spin states for shortness.
For any permutation
$\pi \in S_{N_1+N_2}$ we define
the unitary operator $T_{\pi}$ on $\YN$ which exchanges the variables of $\Psi$ as follows 
\begin{multline}\label{def:Tpi}
(T_{\pi} \Psi)(x_1,s_1,\dots,x_{N_1+N_2}, s_{N_1+N_2})\\=\Psi(x_{\pi^{-1}(1)},s_{\pi^{-1}(1)} \dots,
x_{\pi^{-1}(N_1+N_2)}, s_{\pi^{-1}(N_1+N_2)}),
\end{multline}
where $s_1,\dots,s_{N_1+N_2} \in \{0,...,q-1\}$ are the spin variables of the electrons.
Let $\subi=S_{N_1}\otimes\1 \subset S_{N_1+N_2}$ be the subgroup of $S_{N_1+N_2}$ consisting of the permutations of $\{1,...,N_1\}$, and $\subj=\1\otimes S_{N_2} \subset S_{N_1+N_2}$ be the subgroup of $S_{N_1+N_2}$ consisting of the permutations of $\{N_1+1,...,N_1+N_2\}$. We define, for $j\in\{1,2\}$,
$$\cQ_j=\frac{1}{N_j!}\sum_{\pi \in \mathcal{S}_j}(-1)^\pi T_\pi,\qquad \cQ=\frac{1}{(N_1+N_2)!}\sum_{\pi \in S_{N_1+N_2}}(-1)^\pi T_\pi.$$
Let us also introduce the operator 
$\Hx_{12}:= \Hx_1 +  \Hx_2$
where $\Hx_2=\Hx_{2,0}$. Consider the projected operators onto the appropriate symmetry subspaces
\begin{equation}\label{eq:symHam}
\Hx_{1}^a=\Hx_{1} \cQ_1,\qquad \Hx_{2}^a=\Hx_{2} \cQ_2,\qquad \Hx_{12}^a=\Hx_{12} \cQ_1 \cQ_2,\qquad
\Hx^a=\Hx \cQ.
\end{equation}
We call
 \begin{equation}\label{def:Er}
\boxed{E_j:=\min\sigma(H_j^a),\quad  E(r):=\min\sigma(H^a)}
  \end{equation}
the corresponding ground state energies. From the Zhislin and HVZ theorems~ \cite{Zh,vW,H,Sig,Lewin-11}, these are eigenvalues lying strictly below the essential spectrum. In particular we have $E_j<0$. Moreover $E(r)<0$ if $r$ is large enough, see \cite{LT}. Because of these properties, it is not important whether we consider the operators over the whole space $Y$ or over the appropriate anti-symmetric subspaces, since by definition they are equal to 0  outside of these spaces. In this paper we will always work in $Y$ for convenience. Note also that the bottom of the spectrum of $H_{12}^a$ is just $E_1+E_2$. 
  
For  $i \in \{1,2\}$ and $m \in \mathbb{Z}$ with $m \leq N_i$ and $N_i-m \leq N_1 + N_2$ we define
 \begin{equation*}\label{def:Hm}
 H_{i,m}:=\sum_{j =1}^{N_i-m} \left(- \Delta_{x_j}- \frac{N_i}{|x_j|}\right)+ \sum_{1 \leq j<k \leq N_i-m} \frac{1}{|x_j-x_k|},
 \end{equation*}
which is the Hamiltonian of the ion with atomic number $N_i$
 and total charge $m$. We define $E_{i,m}$ to be the ground state energy of $H_{i,m}^a=H_{i,m}\cQ$ in $Y$.
It has been proved in \cite[Thm.~1.1]{MoS} (see also \cite[Thm.~2]{Le}) that $E(r)$ converges at infinity and that the limit is given by
\begin{equation}
 E(\infty):=\lim_{r \rightarrow \infty} E(r)=\min_{-N_2\leq m \leq N_1} \Big\{E_{1,m} + E_{2,-m}\Big\}.
 \label{eq:value_limit}
\end{equation}
In other words, at the dissociation the two atoms are in their ground states but one still has to optimize over all the ways of distributing the $N_1+N_2$ electrons among them. 
The interaction energy of the system is defined by
 \begin{equation}\label{def:interaction'}
 \boxed{W(\di)=E(\di)- E(\infty). }
 \end{equation}
We will explain shortly that $W(\di) < 0$. This means that it costs energy to separate the atoms, hence the system must be bound at a finite distance $r$. Local minima of $W(\di)$ determine the equilibrium configurations of the diatomic molecule.  
  
We now discuss some known properties of the interaction energy $W(r)$. If the minimum on the right side of~\eqref{eq:value_limit} is attained at some $m\neq0$, then the two atoms have opposite charges in the dissociation. In this case they attract each other at infinity and this results in the upper bound
$$W(r)\leq -\frac{m^2}{r}+o\left(\frac1r\right)$$
(see, e.g.,~\cite{Le}). In particular $W(r)<0$ for $r$ large enough. 
In this paper we are interested in the case where the minimum on the right side of~\eqref{eq:value_limit} is solely attained at $m=0$, that is,
\begin{equation}\label{eq:neutralsplit}
 E_{1} + E_{2}= E_{1,0} + E_{2,0} <   \min_{\substack{-N_2\leq m \leq N_1\\ m\neq0}} \Big\{E_{1,m} + E_{2,-m}\Big\}.
 \end{equation}
It is indeed a famous conjecture that~\eqref{eq:neutralsplit} is satisfied for all $N_1,N_2$. See \cite{AS} and references therein for a discussion, including the case of several atoms. In this paper~\eqref{eq:neutralsplit} will be an assumption throughout. In particular we then have
$$W(r)=E(r)-E_1-E_2.$$
Under the assumption~\eqref{eq:neutralsplit}, the van der Waals-London's law for a system of two atoms asserts that there exists $\sigma>0$ such that
 \begin{equation}\label{vdWlaw}
\boxed{ W(r)=-\frac{\sigma}{r^6}+O\left(\frac{1}{r^7}\right)_{r\to\infty}.}
 \end{equation}
A heuristic explanation was given by London~\cite{Lo} in 1937.

One of the first rigorous results in this direction were given by  Morgan and Simon~\cite{MoS}, who proved, for two atoms without spin and with non-degenerate ground states, that $W(r)$ possesses an asymptotic series in powers of $1/r$. They also mentioned that the $6$th order term is to be interpreted as the van der Waals energy. In 1986,  Lieb and Thirring~\cite{LT} provided an upper bound of the form
$$W(r)\leq -\frac{c}{r^6}$$
for a positive constant $c>0$. The upper bound holds without any assumptions and it also applies to systems of several molecules. After these results,~\eqref{vdWlaw} was proven only relatively recently in \cite{AS} under a type of non-degeneracy assumption in the case of a system of several atoms. 

The goal of this article is to investigate the expansion of derivatives of $W(r)$. Our work extends the results in \cite{AS} in the case of two atoms in several directions.  
In the spinless non-degenerate case we provide estimates on the first two derivatives of the interaction energy from which monotonicity and concavity of the interaction energy follows for large $r$. 
Then we prove the van der Waals-London's law with spin, with the sole assumption that one of the ground state eigenspaces is irreducible in a sense to be discussed below. Moreover, with the help of the methods developed in the spinless case we prove the monotonicity of the interaction energy for large~$r$.

Throughout this work we assume that at least one of the ground state eigenspaces of the two individual atoms is \emph{irreducible}, a concept that we discuss now. The usual method is to introduce a group of spin transformations (that is, unitary operators acting on $L^2(\{0,...,q-1\})=\mathbb{C}^q$), and to require that the first eigenspace of $H^a$ is irreducible with respect to this group action. For simplicity we introduce here one group which works for every $N_1,N_2$. But the arguments below work for any other group representation of the spins, which acts non trivially on fermionic wavefunctions. For $\pi\in S_n$, we introduce the operator exchanging only the spin variables:
\begin{equation}\label{def:Tspin}
\Tspin\Psi(x_1,s_1,\dots,x_n,s_n) := \Psi(x_1,s_{\pi^{-1}(1)},\dots,x_n,s_{\pi^{-1}(n)}).
\end{equation}
Similarly, we may introduce the operator which shifts the spins by one unit
\begin{equation}\label{def:flip}
\mathcal{F}\Psi(x_1,s_1,\dots,x_n,s_n) := \Psi(x_1,s_1+1,\dots,x_n,s_n+1)
\end{equation}
where $s_j+1$ is always understood modulo $q$. For $q=2$, $\cF$ is just the operator which flips each spin to its opposite. Note that $\mathcal{F}$ commutes with $\Tspin$ for all $\pi$, hence these operators generate a finite group of order $qN!$. If $\Psi$ is a ground state of $H^a$ then so is $\mathcal{F}^p\Tspin\Psi$ for any $\pi\in S_N$ and any $p\in\{0,...,q-1\}$, since $H^a$ is does not depend on the spin variables.

\begin{definition}[Irreducibility of ground state eigenspace]\label{def:irreducible}
The ground state eigenspace $G_k$ of $\Hx_k^a$ is called \emph{irreducible} if $\{0\}$ and $G_k$ are its only invariant subspaces under the spin shift $\mathcal{F}$ and all the spin permutations $\widetilde{T}_\pi$ for all $\pi\in S_{N_j}$.
\end{definition}

Equivalently, the eigenspace is spanned by any ground state $\Psi$  and its spin transformations $\mathcal{F}^p\Tspin\Psi$ with $p\in\{0,...,q-1\}$ and $\pi\in S_N$. In particular, the ground state is unique up to spin appropriate rearrangements. In the case without spin, $q=1$, the irreducibility condition simply means that $\dim(G_k)=1$, that is, the eigenvalue is simple. With spin $1/2$ ($q=2$), the irreducibility assumption is verified in some natural physical situations which we discuss in Remarks~\ref{rem:nd1} and~\ref{rem:nd2} below. For $N_k=1$ the ground state eigenspace is always of dimension $q$, by Perron-Frobenius theory~\cite[Sec.~XIII.12]{RSIV}. The corresponding space is then always irreducible, due to the action of the shift $\mathcal{F}$. 

If $G_k$ is the ground state eigenspace of $\Hx_k^a$ then $\Hx_{12}^a$ has the  ground state eigenspace $G_1 \otimes G_2$.
We further define
\begin{equation*}
\Hx_{12}^{a,\bot}:=\Hx_{12}^a (1-\Px_{G_1 \otimes G_2}),
\end{equation*}
where $\Px_{G_1 \otimes G_2}$ denotes the orthogonal projection onto $G_1 \otimes G_2$. 
Since the ground state energy $E(\infty)=E_1+E_2$ is in the discrete spectrum of $H_{12}^a$ and since $E(\infty)<0$, there exists $c>0$ such that
$\Hx_{12}^{a,\bot}-E(\infty) \geq c$ on the whole space $\YN$.
As a consequence, the resolvent $R_{12}^{ \bot}:=(\Hx_{12}^{a,\bot}-E(\infty))^{-1}$ is well defined, bounded and positive on the entire space $\YN$.
 We now introduce 
 \begin{equation}\label{def:fij}
 f(z_1,\dots ,z_{N_1+N_2}) := \sum_{i=1}^{N_1}\sum_{j=N_1+1}^{N_1+N_2}f_{ij},
 \end{equation}
where $f_{ij} = z_i \cdot z_j -3 (z_i \cdot e_1)(z_j \cdot e_1) $  is the dipole-dipole energy. We finally consider the number
 \begin{equation}\label{sigmaijvdef}
\boxed{ \sigma:=\max_{\substack{\Psi \in G_1 \otimes G_2\\ \|\Psi\|=1}} \langle f \Psi, R_{12}^{\bot} f \Psi \rangle.}
\end{equation}
We recall that any ground state in $\zeta_j\in G_j$ is exponentially decaying,
 \begin{equation}\label{eq:expdec}
 	\|e^{d|x|}\partial^\alpha\groundst_j\|_{L^2} < \infty, \text{ for all } \alpha \text{ with, } |\alpha| \leq 2,
 \end{equation}
 see for example \cite{CT}.
 It follows that the  function $f \Psi$
 is in $L^2$ for any $\Psi \in G_1 \otimes G_2$ and thus $\sigma$ is finite.
 Using that $\langle f \Psi, R_{12}^{\bot} f \Psi \rangle=\|(R_{12}^{\bot})^\frac{1}{2} f \Psi\|^2$
one can prove that $\sigma>0$, see \cite[Section 3]{A}, and \cite[Section 2.2]{AL} for the more general case of two molecules.
The following theorem was proved in~\cite{AS}.

\begin{theorem}[van der Waals law in the irreducible case~{\cite{AS}}]\label{thm:old}
Assume \eqref{eq:neutralsplit}. For $j=1,2$, we also assume that the ground state eigenspace of $\Hx_{j}^a$ is irreducible (Definition~\ref{def:irreducible}). Then
there exist positive constants $D_1, D_2$ such that for all $\di \geq D_1$ we have
\begin{equation}\label{eq:old}
\bigg|W(\di)+\frac{\sigma}{\di^6}\bigg| \leq \frac{D_2}{\di^7}.
\end{equation}
\end{theorem}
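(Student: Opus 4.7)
The plan is to carry out second-order Rayleigh--Schr\"odinger perturbation theory around the decoupled ground state eigenspace $G_1\otimes G_2$ of $\Hx_{12}^a=\Hx_1+\Hx_2$, combined with a multipole expansion of the Coulomb interaction $\Ix$ in inverse powers of $\di$. For electrons satisfying $|x_i|\leq \di/4$ for $i\leq N_1$ and $|x_j-\di e_1|\leq \di/4$ for $j>N_1$, Taylor expansion of each of the four Coulomb terms in $\Ix_{ij}$ around the internuclear vector $\di e_1$ yields
\[
\Ix = \frac{1}{\di^3}f + \sum_{k\geq 4}\frac{1}{\di^{k}}g_k,
\]
with multipolar polynomial operators $g_k$ in the electron coordinates (dipole--quadrupole at order $\di^{-4}$, and so on). Regions in which some electron lies far from its nucleus contribute only exponentially small terms by the decay bound \eqref{eq:expdec}, and can be absorbed into the $O(\di^{-7})$ remainder.

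For the upper bound, choose $\Psi_0\in G_1\otimes G_2$ of unit norm realizing the maximum in \eqref{sigmaijvdef}, and form the trial state $\Phi_\di := \cQ\bigl(\Psi_0 - \di^{-3}\chi_\di R_{12}^\bot f \Psi_0\bigr)$, where $\chi_\di$ is a smooth spatial cutoff confining the first $N_1$ electrons near $0$ and the remaining $N_2$ electrons near $\di e_1$. The antisymmetrizer $\cQ$ is needed for admissibility and generates exchange integrals between electrons attached to different atoms, which are exponentially small in $\di$ by \eqref{eq:expdec}. Expanding $\langle\Phi_\di,\Hx\Phi_\di\rangle/\|\Phi_\di\|^2$ using the identity $(\Hx_{12}^a-E(\infty))R_{12}^\bot = P^\bot$, together with the vanishing of the atomic dipole matrix element $\langle\Psi_0,f\Psi_0\rangle$ (a consequence of parity invariance of $\Hx_1,\Hx_2$ combined with irreducibility of $G_j$), the standard second-order cancellation produces
\[
\frac{\langle\Phi_\di,\Hx\Phi_\di\rangle}{\|\Phi_\di\|^2}\leq E(\infty)-\frac{\sigma}{\di^6}+O(\di^{-7}),
\]
hence $W(\di)\leq -\sigma/\di^6+O(\di^{-7})$.

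For the matching lower bound I would use Feshbach--Schur reduction with $P := P_{G_1\otimes G_2}$ and $P^\bot := 1-P$. Since $\Hx_0 P=E(\infty)P$, the ground state energy $E(\di)$ is characterized as the smallest self-consistent eigenvalue on $\Ran P$ of
\[
F(E) := E(\infty)P + P\Ix P - P\Ix P^\bot\bigl(P^\bot\Hx^a P^\bot-E\bigr)^{-1}P^\bot\Ix P.
\]
Using the spectral gap of $\Hx_{12}^{a,\bot}-E(\infty)$ together with a resolvent identity and the $O(\di^{-3})$ multipole bound on $\Ix$, one shows $(P^\bot\Hx^a P^\bot-E)^{-1}=R_{12}^\bot+O(\di^{-3})$ in the relevant operator norm, while atomic symmetries force $P\Ix P = O(\di^{-5})$. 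The dominant $\di^{-6}$ contribution in $F(E)-E(\infty)P$ is therefore $-\di^{-6}PfR_{12}^\bot fP$, whose minimal eigenvalue on $G_1\otimes G_2$ is $-\sigma$ by the definition \eqref{sigmaijvdef}, yielding $W(\di)\geq -\sigma/\di^6-O(\di^{-7})$.

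The two principal technical obstacles are (i) controlling antisymmetrization exchange between electrons attached to different atoms, which requires an IMS-style localization combined with the quantitative exponential decay \eqref{eq:expdec} to ensure exchange integrals are only $O(e^{-\eta\di})$; and (ii) refining the Feshbach resolvent $(P^\bot\Hx^a P^\bot-E)^{-1}$ to accuracy better than $\di^{-4}$ on $\Ran(P^\bot\Ix P)$, so that the substitution by $R_{12}^\bot$ does not contaminate the $\di^{-7}$ remainder. The irreducibility hypothesis enters decisively by forcing the dipole matrix element $PfP$ to vanish, so that the first nontrivial correction to $W(\di)$ indeed appears only at order $\di^{-6}$ and is precisely the one captured by \eqref{sigmaijvdef}.
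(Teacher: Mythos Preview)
Your overall architecture---multipole expansion of $\Ix$, a second-order trial state for the upper bound, Feshbach--Schur reduction for the lower bound---matches the paper's (and \cite{AS}'s). There are, however, two substantial gaps.

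The first is the claim $P\Ix P = O(\di^{-5})$. This bound, if sharp, would put a term of order $\di^{-5}$ into the linear part of the Feshbach map and destroy \eqref{eq:old}. Parity alone kills only the odd multipole moments; the quadrupole--quadrupole interaction (even in both atoms) survives at order $\di^{-5}$. What is actually needed is that the one-body density of each atomic ground state is \emph{spherically symmetric}, so that Newton's theorem makes $\langle\Psi,\Ix\Psi\rangle$ vanish exactly on cut-off states; see \eqref{eq:Ipp} and, in the degenerate case, the argument around \eqref{eq:Newt1}--\eqref{eq:Newt3}. Irreducibility is precisely what delivers this: since every rotation commutes with the spin group, Schur's lemma forces rotations to act as scalars on $G_j$, hence $|\phi|^2$ is rotation invariant. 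Your parity argument is the same Schur reasoning applied to inversion only, and stops too early.

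The second is the resolvent comparison $(P^\bot H^a P^\bot - E)^{-1} = R_{12}^\bot + O(\di^{-3})$. These two resolvents live on different symmetry sectors ($\cQ Y$ versus $\cQ_1\cQ_2 Y$), and $H^a-H_{12}^a$ contains both the full antisymmetrizer and $\Ix$, which is $O(\di^{-3})$ only after localization near the nuclei---globally it is $O(\di^{-1})$. The paper never attempts this comparison directly. Instead it replaces your projection $P_{G_1\otimes G_2}$ by the projection onto $\cQ\psi$ with $\psi=\phi_1\otimes\phi_{2,\di}-\chi R\,\Ix\,\phi_1\otimes\phi_{2,\di}$ built from \emph{cut-off} ground states plus a first-order correction (see \eqref{eq:projP}, \eqref{def:psi0}, and \eqref{def:Psinm} in the degenerate case). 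The cutoff makes $\psi$ and $T_\pi\psi$ disjointly supported for $\pi\notin\subij$, trivializing the exchange contributions; the first-order correction improves $(H-E(\infty))\psi$ from $O(\di^{-3})$ to $O(\di^{-4})$ (Lemma~\ref{lem:Ipsi0est}), so the nonlinear Feshbach term becomes $O(\di^{-8})$ and the delicate resolvent estimate is never needed at the order relevant for \eqref{eq:old}.
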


\medskip

\begin{remark}[Lennard-Jones]·\rm 
The limit~\eqref{eq:old} justifies the replacement of $W(r)$ by the Lennard-Jones potential $cr^{-12}-dr^{-6}$ at infinity. The highly repulsive part $r^{-12}$ is itself purely empirical. The Lennard-Jones potential was initially introduced to infer an appropriate law of dependence of the viscosity of a gas on the temperature~\cite{J1} and to study the equation of state of gases~\cite{J2}. 
\end{remark}

Theorem \ref{thm:old} justifies the long-range part of $W$ but does not provide the monotonicity nor the concavity of $W(r)$ for large $r$. The inequality \eqref{eq:old} does not exclude the possibility that $W(r)$ oscillates for large $r$, and in particular it does not exclude existence of local minima for large $r$. Since $W'(r)$ is interpreted as the van der Waals \emph{force}, it is of course important to give its large-$r$ behavior as well.

Our results help in addressing these issues and give a more detailed description of the long range behavior of the interaction energy. The first theorem provides the leading order of the first two derivatives of $W(\di)$, in the case without spin $q=1$. In the second theorem spin is taken into account and the van der Waals-London's law is proven with the irreducibility assumption on only one of the two atoms. Moreover the strict monotonicity of $W(\di)$ is proven for $r\gg1$, but no exact expansion of the derivatives is obtained. 

\begin{theorem}[Spinless case]\label{thm:new}
Let $q=1$ (no spin). Assume \eqref{eq:neutralsplit} and that the ground state eigenspaces of $\Hx_{j}^a$, $j=1,2$ are both non-degenerate. 
Then there exist positive constants $D_1, D_2$ such that for all $\di \geq D_1$ we have 
\begin{equation}
\bigg| W(\di)+\frac{\sigma}{\di^6}\bigg| \leq {\frac{D_2}{\di^7}},\qquad \bigg| W'(\di)-\frac{6 \sigma}{\di^7}\bigg| \leq {\frac{D_2}{\di^8}},\qquad \bigg| W''(\di)+\frac{42 \sigma}{\di^8}\bigg| \leq {\frac{D_2}{\di^9}}.
 \label{eq:derexp}
\end{equation}
\end{theorem}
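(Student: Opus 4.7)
The plan is to combine a Feynman--Hellmann identity with an analytic perturbation expansion, pushing the one-parameter expansion that underlies Theorem~\ref{thm:old} one and two $r$-derivatives further.

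First I would move the $r$-dependence onto the potential by conjugating with the unitary $T_r$ that translates the second atom's coordinates $x_{N_1+1},\dots,x_{N_1+N_2}$ by $-re_1$. The conjugated operator reads $\widetilde H(r):=T_rH(r)T_r^\ast=H_{12}+\widetilde I_r$ with $H_{12}$ independent of $r$. Taylor-expanding each Coulomb factor gives, on the localisation region of the ground states, a convergent multipole series
$$\widetilde I_r=\frac{f}{r^3}+\sum_{n\ge4}\frac{V_n}{r^n},$$
where each $V_n$ is a polynomial of total degree $n-1$ that involves at least one coordinate from each atom (the monopole and the single-atom multipole contributions cancel by neutrality). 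Term-wise differentiation yields analogous expansions for $\partial_r\widetilde I_r=-3f/r^4-4V_4/r^5-\cdots$ and $\partial_r^2\widetilde I_r=12f/r^5+20V_4/r^6+\cdots$, with remainders pointwise controlled by $C_N\langle x\rangle^{N}/r^{N+1}$ for any $N$.

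Next I would apply Kato's analytic perturbation theory. Under \eqref{eq:neutralsplit} and non-degeneracy, $H_{12}^a$ has a simple ground state $\Psi_0=\zeta_1\otimes\zeta_2$ at $E(\infty)=E_1+E_2$ isolated from the rest of the spectrum by the gap $c>0$ used to define $R_{12}^{\bot}$. For $r$ large enough the Riesz projector
$$P_r=-\frac{1}{2\pi i}\oint_{\Gamma}(\widetilde H(r)-z)^{-1}\,dz$$
has rank one, the normalized ground state $\Psi_r$ and the eigenvalue $E(r)$ depend $C^2$ on $r$, and iterating the resolvent identity against the exponential bounds \eqref{eq:expdec} produces the $L^2$ expansions
$$\Psi_r=\Psi_0-\frac{1}{r^3}R_{12}^{\bot}f\Psi_0+O(r^{-4}),\qquad \partial_r\Psi_r=\frac{3}{r^4}R_{12}^{\bot}f\Psi_0+O(r^{-5}),$$
together with companion bounds on $\partial_r^2\Psi_r$. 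The first estimate of \eqref{eq:derexp} can be quoted from Theorem~\ref{thm:old} or re-derived in parallel. For the derivative estimates I would then use the Feynman--Hellmann identities
$$W'(r)=\bigl\langle\Psi_r,(\partial_r\widetilde I_r)\Psi_r\bigr\rangle,\qquad W''(r)=\bigl\langle\Psi_r,(\partial_r^2\widetilde I_r)\Psi_r\bigr\rangle+2\,\mathrm{Re}\,\bigl\langle\partial_r\Psi_r,(\partial_r\widetilde I_r)\Psi_r\bigr\rangle.$$
Non-degeneracy forces $\zeta_1,\zeta_2$ to lie in the trivial representation of $SO(3)$; the addition theorem for solid harmonics then gives $\langle\Psi_0,V_n\Psi_0\rangle=0$ for every $n\ge3$, so the only surviving leading contributions are cross terms. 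For $W'$ this is
$$-\frac{3}{r^4}\cdot 2\,\mathrm{Re}\,\bigl\langle\Psi_0,f(\Psi_r-\Psi_0)\bigr\rangle=\frac{6}{r^7}\langle\Psi_0,fR_{12}^{\bot}f\Psi_0\rangle=\frac{6\sigma}{r^7},$$
since under non-degeneracy $\sigma=\langle\Psi_0,fR_{12}^{\bot}f\Psi_0\rangle$. The same bookkeeping applied to $W''$ produces $-24\sigma/r^8$ from the $\partial_r^2\widetilde I_r$ term and $-18\sigma/r^8$ from the $\partial_r\Psi_r$ term, summing to the stated $-42\sigma/r^8$.

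The main obstacle I anticipate is the rigorous control of the remainders when one and two $r$-derivatives are passed through. The operators $V_n$ and $\partial_r^k\widetilde I_r$ grow polynomially in the electronic coordinates, and $\widetilde I_r$ itself retains moving Coulomb singularities, so operator-norm bounds are insufficient. The argument requires combining \eqref{eq:expdec} with a smooth cutoff separating the two atomic regions before invoking the Neumann expansion of the resolvent, and checking that the same $O(r^{-N})$ bounds survive after pointwise $r$-differentiation inside the contour integral - essentially an upgrade of the machinery of \cite{AS} that tolerates $\partial_r$ and $\partial_r^2$. Once this is in place the three bounds in \eqref{eq:derexp} emerge simultaneously from the contour representation of $E(r)$ and its derivatives.
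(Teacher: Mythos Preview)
Your route via Feynman--Hellmann and a Riesz-projector expansion is genuinely different from the paper's. The paper never differentiates the Hamiltonian against the exact ground state. Instead it fixes a rank-one projection $P$ onto $\cQ\psi$, where $\psi$ is a cut-off, second-order-corrected product state, writes the Feshbach identity $G(r,E(r))=0$ with $G(r,E)=E-\langle\psi,H\psi\rangle+A(r,E)$, proves $\langle\psi,H\psi\rangle=E(\infty)-\sigma r^{-6}+O_d^\infty(r^{-7})$ with all $r$-derivatives under control, and then bounds the nonlinear Feshbach remainder $A$ together with $\partial_r A$ and $\partial_r^2 A$ by $O(r^{-8})$, resp.\ $O(r^{-9})$. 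The three estimates in \eqref{eq:derexp} are finally read off from the implicit function theorem applied to $G$. Your approach is conceptually more direct and avoids the implicit function theorem entirely; the paper's buys a clean separation between a leading term that is manifestly $O_d^\infty(r^{-6})$ and a uniformly small nonlinear correction, and it also sidesteps the non-$L^1_{\rm loc}$ singularity of $\partial_r^2 H$ in original coordinates by a change-of-variables trick rather than by relying on the multipole expansion being globally valid.

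There is, however, a gap in your setup concerning the full antisymmetriser $\cQ$. The conjugation $T_r$ does not commute with permutations that exchange electrons between the two groups, so in relative coordinates the fully antisymmetric subspace $T_r\cQ Y$ becomes $r$-dependent. Your Riesz projector is built around $\Psi_0=\zeta_1\otimes\zeta_2$ using the reduced resolvent $R_{12}^{\bot}$ of $H_{12}^a=H_{12}\cQ_1\cQ_2$; this expansion lives in the partially antisymmetric space $\cQ_1\cQ_2 Y$ and produces the ground state and eigenvalue $E_{12}(r)$ of $\widetilde H(r)$ on that larger space, not the true $E(r)=\min\sigma(H\cQ)$. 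Your Feynman--Hellmann identities then compute $E_{12}'(r)$ and $E_{12}''(r)$. To reach the theorem you must still show that $E(r)-E_{12}(r)$ and its first two $r$-derivatives are $O(e^{-cr})$ (the exchange contribution). This is true but not automatic, and it is precisely what the paper's cut-offs and support-disjointness identities \eqref{eq:psicapT12psi}, \eqref{eq:psicapT12psi'} and \eqref{eq:suppChix} are designed to deliver inside the Feshbach framework, where $\langle\cQ\psi,H\cQ\psi\rangle/\|\cQ\psi\|^2$ collapses exactly to $\langle\psi,H\psi\rangle$. Without an analogous step your argument formally expands the wrong eigenvalue.
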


\begin{theorem}[Case with spin]\label{thm:new2}
Let $q\geq1$. Assume \eqref{eq:neutralsplit} and that one of the two ground state eigenspaces of $\Hx_{j}^a$ is irreducible. Then
there exist positive constants $D_1, D_2$ such that for all $\di \geq D_1$ we have
\begin{equation}\label{eq:new2}
\bigg|W(\di)+\frac{\sigma}{\di^6}\bigg| \leq \frac{D_2}{\di^7}.
\end{equation}
Moreover, $W(\di)$ is strictly increasing for $\di$ large enough.
\end{theorem}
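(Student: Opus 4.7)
The plan is to combine a Feshbach--Schur reduction onto the product ground state subspace $G_1\otimes G_2$ with a multipole expansion of the interaction, in the spirit of~\cite{AS}, but exploiting the fact that only one of the two atomic eigenspaces needs to be irreducible. Throughout we assume without loss of generality that $G_1$ is irreducible. The starting point is the splitting $H = H_0 + I$ with $H_0 = H_1 + H_{2,r}$, together with the multipole expansion
\begin{equation*}
 I = \frac{f}{r^3} + \frac{g_4}{r^4} + \frac{g_5}{r^5} + O(r^{-6}),
\end{equation*}
valid up to exponentially small errors in expectations against states localized near the two nuclei, thanks to the exponential decay~\eqref{eq:expdec} of atomic ground states.

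For the upper bound one tests $H(r)-E(\infty)$ against the trial state $\Phi_r := \Psi^\ast - r^{-3} R_{12}^\perp f \Psi^\ast$, where $\Psi^\ast\in G_1\otimes G_2$ is a normalized maximizer of~\eqref{sigmaijvdef}. Using the parity symmetry of the atomic ground states (each $H_j^a$ commutes with the spatial inversion $x_i\mapsto -x_i$), the diagonal matrix elements $\langle \Psi^\ast, I\, \Psi^\ast\rangle$ vanish up to order $r^{-7}$, and the resulting Rayleigh quotient yields $W(r)\le -\sigma/r^6 + O(r^{-7})$.

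For the matching lower bound, apply the Feshbach--Schur map with the projection $P := P_{G_1\otimes G_2}$. Since $E(\infty)=E_1+E_2$ lies below the rest of the spectrum of $H_0$ with a uniform gap independent of $r$, the reduced operator
\begin{equation*}
F(r) := P(H-E(\infty))P - P\,I\, P^\perp \bigl(P^\perp(H-E(\infty))P^\perp\bigr)^{-1}P^\perp\, I\,P
\end{equation*}
is well defined on $\Ran P$ for $r$ large, its minimum eigenvalue equals $W(r)$, and a Neumann expansion of the resolvent gives $F(r) = -r^{-6}\, P f R_{12}^\perp f P + O(r^{-7})$. The key step is then to identify the minimum eigenvalue of the effective matrix $M := -PfR_{12}^\perp fP$ with $-\sigma$. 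Since $f$ and $R_{12}^\perp$ are spin-independent and commute with the spin operators $\cF^p\Tspin$ acting on either factor, the irreducibility of $G_1$ means that any $\Psi_1\in G_1$ is a spin-transform of a fixed $\zeta_1\in G_1$, so the spectral problem for $M$ on $G_1\otimes G_2$ reduces to one on $G_2$ alone, whose largest eigenvalue is $\sigma$ by the very definition~\eqref{sigmaijvdef}.

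For strict monotonicity, $W(r)$ coincides with the lowest eigenvalue of $F(r)$, which by analytic perturbation theory applied to the finite-dimensional Feshbach matrix is piecewise real-analytic in $r$, with only isolated crossings on compact sub-intervals of $[D_1,\infty)$. Outside these crossings, the Hellmann--Feynman formula combined with term-by-term differentiation of the expansion of $F(r)$ yields $W'(r)=6\sigma/r^7+o(r^{-7})>0$ for $r$ large, and strict monotonicity follows by continuity. The main obstacle is to carry out the Feshbach--Schur expansion uniformly in $r$ while controlling the possibly non-trivial degeneracy of $G_2$: one must rule out cross-terms between different spin or spatial sectors of $G_2$ that could a priori contribute at the critical order $r^{-6}$, and this is precisely where the irreducibility of $G_1$, together with the spin-invariance of $f$ and $R_{12}^\perp$, plays its essential role.
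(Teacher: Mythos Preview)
Your argument for~\eqref{eq:new2} has the right architecture (Feshbach reduction onto the product ground state, multipole expansion, Schur's lemma on the irreducible factor), but one step is incorrectly justified. You claim that ``parity symmetry of the atomic ground states'' makes $\langle\Psi^\ast, I\,\Psi^\ast\rangle$ and $PIP$ vanish up to order $r^{-7}$. Parity alone only kills odd multipole moments; the quadrupole--quadrupole term at order $r^{-5}$ survives, and with a possibly degenerate $G_2$ an individual ground state need not even have definite parity. What actually makes $PIP$ vanish (up to exponentially small errors) is exactly the mechanism you invoke later: by Schur's lemma on the irreducible factor $G_1$, the matrix $\langle\zeta\otimes\psi, I\,\zeta'\otimes\psi'\rangle$ equals $\langle\zeta,\zeta'\rangle$ times the average over an orthonormal basis of $G_1$; that average sees only the \emph{averaged} one--body density of atom~1, which is spherically symmetric because the ground state projection commutes with rotations, and then Newton's theorem gives zero. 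This is precisely the content of the paper's Lemma~\ref{lem:PiHPi} combined with~\eqref{eq:Newt1}--\eqref{eq:Newt3}. So the conclusion you want is available, but not for the reason you state.

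The monotonicity argument, however, has a genuine gap. You write that term--by--term differentiation of the expansion $F(r)=-r^{-6}M+O(r^{-7})$ yields $W'(r)=6\sigma r^{-7}+o(r^{-7})$ via Hellmann--Feynman. But an $O(r^{-7})$ bound on a function gives no control on its derivative (cf.\ Remark after Theorem~\ref{thm:new2} in the paper: $e^{-ar}\sin(e^{ar})$ has trivial asymptotic series but $O(1)$ derivative), and the Feshbach matrix depends on $r$ through the full resolvent $(P^\perp(H-E)P^\perp)^{-1}$, whose $r$--derivative involves $\partial_r H$ with its Coulomb singularities. In the degenerate setting the paper explicitly does \emph{not} establish differentiability of $W(r)$; instead it proves monotonicity by a comparison argument. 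One fixes $s$, translates an eigenvector $\Psi_s$ of $F_\Pi(E(s))$ to obtain a test function $\Psi_r$ at distance $r$, and defines $D(r)=\langle\Psi_r,H_r\Psi_r\rangle-N(r)$ with $N(r)$ built from the resolvent at the \emph{fixed} energy $E(s)$. The point is that $D(r)$ is constructed so that each piece can be differentiated (the frozen $E(s)$ removes the implicit dependence), yielding $D'(r)>0$ for large $r$. One then argues by contradiction: if $E(r)\ge E(s)$ for some $r<s$, the resolvent monotonicity $-(H_r^\perp-E(s))^{-1}\ge -(H_r^\perp-E(r))^{-1}$ and the variational characterization of the Feshbach eigenvalue give $D(r)\ge E(r)\ge E(s)=D(s)$, contradicting strict increase of $D$. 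Your Hellmann--Feynman route would need a separate proof that the $O(r^{-7})$ remainder in $F(r)$ is in fact $O_d^1(r^{-7})$, which is essentially the content of Section~\ref{sec:proof} and is not known in the degenerate case.
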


\begin{remark}\label{rem:nd1}\rm 
If we do not take the fermionic statistics into account (which corresponds to $q=N$), the non-degeneracy assumption of the ground state eigenspaces is always satisfied, by Perron-Frobenius theory~\cite[Sec.~XIII.12]{RSIV}. Then a result similar to Theorem~\ref{thm:new} holds. 
\end{remark}

\begin{remark}\label{rem:nd2}\rm 
The condition \eqref{eq:neutralsplit} is known to hold when both atoms are hydrogens, that is, $N_1=N_2=1$~\cite{A,AS,FGRS}. In addition, the ground state eigenspace of the H$_2$ molecule is a singlet state for all $r>0$ when q=2, hence it is non-degenerate. We conclude that~\eqref{eq:derexp} holds for the H$_2$ molecule.

For the helium atom $N_j=2$ with spin ($q=2$), the ground state is a singlet state and hence it is therefore non-degenerate. We see that, under the assumption~\eqref{eq:neutralsplit}, the conclusions of Theorem \ref{thm:new2} hold for a hydrogen or helium atom interacting with any other atom. 
\end{remark}

% \begin{remark}\rm 
% We believe that Theorem \ref{thm:new} is applicable in the case of several atoms as well with somewhat more complicated statements of the main theorem. 
% \end{remark}

\begin{remark}\rm  The irreducibility assumption and the neglected spin in part of our work are unsatisfactory. Unfortunately we have been unable to drop these conditions. The technical reason is that the Feshbach map, which we introduce and use below, is a matrix that is not necessarily a multiple of the identity if there are degeneracies. Its lowest eigenvalue is therefore not necessarily smooth, due to possible crossings. For this reason without the non-degeneracy assumption we were not even able to prove the differentiability of $W(r)$. Note that to prove the monotonicity of $W(r)$ in Theorem \ref{thm:new2} we proceed in a way that does not involve smoothness of $W(r)$, adapting an argument from~\cite{AL}.
\end{remark}

\begin{remark}\rm 
In the spinless case $q=1$, it is known that $W(r)$ is a real-analytic function of $r$ in the non-degenerate case~\cite{CoSe,H1}. Furthermore, by \cite{MoS} $W(r)$ can be expanded as an infinite power series in $r^{-1}$, which is however usually not convergent~\cite{GrGrHaSi}. The asymptotic expansion of $W(r)$ together with the information that $W(r)$ is analytic does not immediately provide an information on $W'(r)$. For example the function $e^{-ar}\sin(e^{ar})$ is analytic, has 0 as an asymptotic series at infinity, but its derivative is $O(1)$. 
\end{remark}

\begin{remark}\rm 
An estimate on $W'(r)$ similar to that in \eqref{eq:derexp} has already been proved in~\cite[Section 3.3.3]{Le} for the case of two molecules with dipole moments. It was indeed mentioned for the first time in~\cite{Le} that only the non-degeneracy of one of the two ground states is sufficient.
\end{remark}

\begin{remark}\rm 
The method used for proving Theorem \ref{thm:new} was recently adapted in \cite{AL} in order to study the compactness of molecular paths between two local minima of the interaction energy, in the case of two rigid molecules.  
\end{remark}

\subsection*{Notation}
For two functions $f,g$ of $\di>0$, with $g$  real valued, we say that $f = O(g)$, if $f$ is a continuous function of $r$ and there exist constants $C,D$ such that if $\di\geq C$ then $\|f(\di)\| \leq D |g(\di)|$. If $f$ takes its values in $L^2$ or in the algebra of bounded operators $\mathcal{B}(L^2)$ on $L^2$,  then the continuity and the inequality are understood in terms of the respective norms. We also write $f = \Od^m(g)$ if $f=O(g)$ and $\frac{d^k f}{dr^k} = O(\frac{d^k g}{dr^k})$ for all $k \leq m$. In particular, if $g = \frac{1}{\di^n}$ with $n$ an integer, then $f=\Od^m(g)$ means that $\|\frac{d^k}{dr^k}f(r)\|\leq Cr^{-n-k}$ for all $k\leq m$.
Thus in our notation the estimates of Theorems \ref{thm:new} can be summarized as
$W(r)=- \sigma r^{-6} + \Od^2(r^{-7})$.

\subsection*{Acknowledgements} 
I.A. is grateful to Volker Bach for suggesting part of this problem and to Dirk Hundertmark for discussions at an early stage of the work and for a suggestion that led to the proof of the important equations \eqref{Ifnocutoff}, \eqref{eq:chizIder}. All authors are grateful to Semjon Wugalter  for useful remarks which have led us to include the spin into account. This project has received funding from the European Research Council (ERC) under the European Union's Horizon 2020 research and innovation programme (grant agreement MDFT No 725528 of M.L.).
The research of I.A. was supported by the German Science Foundation under Grant No.  CRC 1173.

\section{The Feshbach map}

\subsection{Definition of the Feshbach map}
Let $\Px$ be an orthogonal projection on $Y$ (the space defined in \eqref{def:YN1N2}), with $\Ran \Px \subset \cQ Y\cap D(H^a)$.
Let also $\Px^\bot=1-\Px$ and $\Hx^{a,\bot}:=\Px^\bot \Hx^a \Px^\bot$, where we recall that $\Hx^a$ was defined in \eqref{eq:symHam}.
Assume that
there exists $C>0$ such that
\begin{equation}\label{stability}
 \Hx^{a,\bot}-E(\di) \geq C.
\end{equation}
 Then $E(\di)$ is an eigenvalue of the Feshbach map $F_\Px(E(r))$, where 
\begin{equation}\label{FP}
F_\Px(\lambda)=(\Px \Hx \Px-\Px \Hx \Px^\bot (\Hx^{a,\bot}-\lambda)^{-1} \Px^\bot \Hx \Px)|_{\Ran  \Px}.
\end{equation}
Equation \eqref{FP} tells us that finding the ground state energy $E(\di)$ reduces to solving a nonlinear fixed point problem on the range of $\Px$. The proof is well known, see for example \cite{BFS} and~\cite[Sec.~2.3]{AL}, but for convenience of the reader we sketch it here.

Let $\psix$ be a ground state of $\Hx^a$ and write $E=E(\di)$. Then $(\Hx^a-E) \psix=0$. This gives that $ \Px^\bot (\Hx^a-E) \psix=0$. Therefore writing $\psix= \Px \psix+ \Px^\bot \psix$, we obtain that
$$  \Px^\bot (\Hx^a-E) \Px^\bot \psix=- \Px^\bot (\Hx^a-E) \Px \psix=-\Px^\bot \Hx \Px \psix,$$
where in the last step we used that $\Px^\bot \Px\psix=0$ and that $H^a P= H P$. Thus, using that, due to \eqref{stability}, $ (\Hx^{a,\bot}-E)$ is invertible  we obtain that
\begin{equation}\label{PPbot}
	\Px^\bot \psix=-\Px^\bot (\Hx^{a,\bot}-E)^{-1} \Px^\bot \Hx \Px \psix.
\end{equation}
Multiplying both sides with $\Px \Hx$ we find
$$ \Px \Hx  \Px^\bot \psix=-\Px\Hx \Px^\bot (\Hx^{a,\bot}-E)^{-1} \Px^\bot \Hx \Px \psix.$$
Using $\Px^\bot=1-\Px$ on the left hand side of the last equation and $\Px\Hx\psix = E\Px\psix$ it follows that
$$ E \Px \psix = \Px\Hx \Px \psix-\Px\Hx \Px^\bot (\Hx^{a,\bot}-E)^{-1} \Px^\bot \Hx \Px \psix.$$
We next observe that $\Px \psix \neq 0$, otherwise the right hand side of \eqref{PPbot} would be zero and this would give that $\Px^\bot \psix$ is also zero, contradicting that $\psix \neq 0$. It therefore follows that $E$ is an eigenvalue of the Feshbach map.

\subsection{Choice of the projection $\Px$ in the spinless case}
 Let $\chi_1: \mathbb{R}^3
 \rightarrow \mathbb{R}$ be a spherically symmetric $C^\infty$ function
 supported in the ball $B(0,\frac{1}{6})$ and equal to $1$ in the ball
 $B(0,\frac{1}{7})$ with $ 0 \leq \chi_1 \leq 1$.  
Our goal is to compute the derivative $W'(\di)$. To do this we analyze $W(\di)$ for $\di$ near some $\di_0$. Note that it is convenient to cut the ground state off. The cut-off must be $\di$ dependent when we want to study the asymptotic behaviour as $\di\rightarrow\infty$. This would however introduce extra terms when we differentiate with respect to $\di$, something which we would like to avoid. Therefore, to study the interaction energy near $\di$ we choose a $\di_0 \in \R$ with $\di_0 < \di < \di_0 +1$ and we define
\begin{equation}\label{def:phi}
\phi_j\left(x_1,\dots,x_{N_j}\right) := c_j \left(\prod_{i=1}^{N_j}\chi_1\left(\frac{x_i}{\di_0}\right)\right) \groundst_j\left(x_1,\dots,x_{N_j}\right),
\end{equation}
so that the cut-off does not change when we vary $\di$ a little bit and where $c_j>0$ is chosen so that $\|\phi_j\|_{L^2}=1$.

We further introduce the resolvent
\begin{equation}\label{def:Rr}
\Rrx := \left(\Px_0^{\bot} \Hxabot_0 \Px_0^{\bot} - E(\infty)\right)^{-1} ,
\end{equation}
where
\begin{equation*}
\Hxabot_0 = \Pxbot_0\cQ_1\cQ_2\Hx_0\Pxbot_0,
\end{equation*}
with $\Hx_0$  defined in \eqref{Hdec} and
\begin{equation}\label{def:Px0}
\Px_0 := | \ppy\rangle \langle\ppy|.
\end{equation}
Here $\phi_{j,\di}(x_1,\dots,x_{N_j}) := \phi_j(x_1-\di e_1,\dots,x_{N_j}-\di e_1)$ is the cut-off ground state $\groundst_j$ with the nucleus placed at $\di e_1$. This simplifies our analysis considerably, since the cut-off ensures that $\ppy$ and $T_{\pi}\ppy$ have disjoint supports if $\pi\notin\subij$. Here $\subij$ is the subgroup of $S_{N_1+N_2}$ leaving the sets $\{1,\dots,N_1\}$ and $\{N_1+1,\dots,N_1+N_2\}$ invariant.

We further define
\begin{equation*}\label{chiR}
\chix\left(x_1,\dots,x_{N_1+N_2}\right):=\prod_{i=1}^{N_1}\left(\chi_1\left(\frac{6x_i}{7\di_0}\right)\right)\prod_{j=N_1+1}^{N_1+N_2} \left(\chi_1\left(\frac{6(x_j-\di e_1)}{7\di_0}\right)\right).
\end{equation*}
The dilation that we applied by multiplying with $\frac{6}{7}$ ensures that for $\di$ large enough
\begin{equation}
\label{eq:cutoffsupp}
\left\{
\begin{aligned}
	&\chix = 1 \text{ on } \supp(\ppy), \\
	&\chix T_\pi(\ppy) = 0, \qquad\forall \pi\in S_{N_1+N_2}\backslash \subij.
\end{aligned}\right.
\end{equation}

In the rest of the paper we will always assume without explicitly mentioning it that $\di$ is large enough. We choose 
\begin{equation}\label{eq:projP}
 \Px= \frac{1}{\|\cQ\psix\|^2}|\cQ\psix \rangle \langle \cQ\psix|,
 \end{equation} 
 where 
 \begin{equation}\label{def:psi0}
\psix := \frac{\psix_0}{\|\psix_0\|}, \text{ with }	 \psix_0 := \ppy -\chix\Rrx \Ix \ppy.
 \end{equation}
 Note that 
 \begin{equation}\label{eq:psicapT12psi}
 \supp(\psix) \cap \supp (T_{\pi}\psix) = \emptyset, \qquad\forall \pi\in S_{N_1+N_2}\backslash \subij,
 \end{equation}
 so $\psix$ and $T_{\pi}\psix$ are orthogonal, when $\pi\notin(\subij)$. Moreover,
 \begin{equation}\label{eq:psicapT12psi'}
 \psix=(-1)^\pi T_{\pi}\psix, \qquad \forall \pi\in \subij.
 \end{equation}
In \cite{A}, see also \cite{AS}, it was shown that \eqref{eq:neutralsplit} implies that there exist $C,c>0$, such that if $\di>c$ then
\begin{multline*}
\!\!\!\!\!\left(1-\frac{|\cQ(\ppy)\rangle\langle\cQ(\ppy)|}{\|\cQ(\ppy)\|^2}\right)\Hx^a\left(1-\frac{|\cQ(\ppy)\rangle\langle\cQ(\ppy)|}{\|\cQ(\ppy)\|^2}\right) \\ \geq C+E(\di).
\end{multline*}
Moreover, $\Hx^{a,\bot}:=\Pxbot\Hx^a\Pxbot$ is a perturbation of $$\left(1-\frac{|\cQ(\ppy)\rangle\langle\cQ(\ppy)|}{\|\cQ(\ppy)\|^2}\right)\Hx^a\left(1-\frac{|\cQ(\ppy)\rangle\langle\cQ(\ppy)|}{\|\cQ(\ppy)\|^2}\right),$$ of the order $O(\di^{-3})$ since
$\|\chix\Rrx\Ix\ppy\|_{H^2} = O(r^{-3})$.
Therefore, the following lemma holds.
 \begin{lemma}\label{lem:IMScor}
 	With the above choice of $\Px$ in \eqref{eq:projP}, the assumption \eqref{eq:neutralsplit} implies that there exist $C,c>0$ such that if $\di \geq c$ then 
 	\begin{equation}\label{eq:IMScor}
 	\Hx^{a,\bot}-E(\di) \geq C.
 	\end{equation}
 \end{lemma}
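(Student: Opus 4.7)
The strategy is to view $\Px$ as a rank-at-most-two perturbation of the reference projection
\[ P_{\mathrm{ref}} := \frac{|\cQ\ppy\rangle\langle\cQ\ppy|}{\|\cQ\ppy\|^2}, \]
for which the gap $P_{\mathrm{ref}}^\bot \Hx^a P_{\mathrm{ref}}^\bot - E(\di) \geq C > 0$ is exactly the inequality from \cite{A,AS} recalled just before the lemma. The task then reduces to showing that replacing $P_{\mathrm{ref}}$ by $\Px$ perturbs the associated quadratic form on $Y$ by only $O(\di^{-3})$, so that half of the gap survives for $\di$ large.

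First I would quantify the smallness of the perturbation. By construction $\psix_0 = \ppy - \chix\Rx\Ix\ppy$, and the bound $\|\chix \Rx \Ix \ppy\|_{H^2} = O(\di^{-3})$ stated in the paragraph above the lemma (and proved in \cite{A,AS} using the exponential decay \eqref{eq:expdec} together with the cancellations built into the dipole interaction $\Ix$) gives $\|\psix_0 - \ppy\|_{H^2} = O(\di^{-3})$. Because the $T_\pi$-translates of $\ppy$ are mutually disjointly supported for $\pi \notin \subij$ by \eqref{eq:cutoffsupp}, and likewise for $\psix_0$ by \eqref{eq:psicapT12psi}, the normalizations $\|\cQ\ppy\|$ and $\|\cQ\psix_0\|$ agree up to $O(\di^{-3})$ and are uniformly bounded below. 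After dividing by these norms I would conclude $\|\cQ\psix - \cQ\ppy\|_{H^2} = O(\di^{-3})$, and then by the standard rank-one perturbation identity for orthogonal projections, $\|\Px - P_{\mathrm{ref}}\|_{Y \to H^2} = O(\di^{-3})$, with range contained in $\Span\{\cQ\psix, \cQ\ppy\} \subset H^2 = D(\Hx^a)$.

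Next I would carry out the gap transfer. Writing $\Pxbot = P_{\mathrm{ref}}^\bot + (P_{\mathrm{ref}} - \Px)$ and expanding yields
\[ \Pxbot \Hx^a \Pxbot - P_{\mathrm{ref}}^\bot \Hx^a P_{\mathrm{ref}}^\bot = (P_{\mathrm{ref}} - \Px)\Hx^a \Pxbot + P_{\mathrm{ref}}^\bot \Hx^a (P_{\mathrm{ref}} - \Px). \]
Each term on the right is a bounded operator on $Y$ of norm $O(\di^{-3})$: indeed $\Hx^a(P_{\mathrm{ref}} - \Px)$ has rank at most two, and since $P_{\mathrm{ref}} - \Px$ maps $Y$ into $H^2$ with norm $O(\di^{-3})$ while $\Hx^a$ is bounded from $H^2$ to $Y$, the composition has $Y \to Y$ norm $O(\di^{-3})$. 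Combining with the reference gap yields $\Hx^{a,\bot} - E(\di) \geq C - O(\di^{-3}) \geq C/2$ for $\di$ large enough, which is exactly \eqref{eq:IMScor}. The main subtle point throughout is precisely this control of $\Hx^a(P_{\mathrm{ref}} - \Px)$ as a bounded operator on $Y$ in spite of the unboundedness of $\Hx^a$; it works because $P_{\mathrm{ref}} - \Px$ is of finite rank with range in $D(\Hx^a) = H^2$ and because the perturbation estimate for $\psix - \ppy$ is at the level of the $H^2$ norm rather than merely the $L^2$ norm.
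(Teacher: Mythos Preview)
Your proposal is correct and follows exactly the approach sketched in the paper: cite the gap for the reference projection $P_{\mathrm{ref}}$ from \cite{A,AS}, then use $\|\chix\Rx\Ix\ppy\|_{H^2}=O(\di^{-3})$ to show that $\Pxbot\Hx^a\Pxbot$ is an $O(\di^{-3})$ perturbation of $P_{\mathrm{ref}}^\bot\Hx^a P_{\mathrm{ref}}^\bot$. The paper states this in one sentence; you have simply written out the details of the perturbation argument, including the key observation that the $H^2$ (rather than $L^2$) control of $\psix-\ppy$ is what makes $\Hx^a(P_{\mathrm{ref}}-\Px)$ bounded on $Y$ despite the unboundedness of $\Hx^a$.
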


\section{Proof  of Theorem~\ref{thm:new}}\label{sec:proof}
In this section we prove Theorem \ref{thm:new}. 
We first observe that due to \eqref{eq:projP},  \eqref{eq:psicapT12psi} and \eqref{eq:psicapT12psi'}
\begin{equation}\label{eq:PHP}
\Px\Hx\Px = \frac{1}{\|\cQ\psix\|^2}\langle\cQ\psix,\Hx\cQ\psix\rangle\Px= \langle\psix,\Hx\psix\rangle\Px,
\end{equation}
where we have also used that $\Hx$ is a local operator.

Since $E(\di)$ is the ground state energy of $\Hx^a$, it is also an eigenvalue of the Feshbach map $F_P(E(\di))$. Therefore, recalling that $P$ has  rank 1  we find
$E(\di)=\frac{1}{\|\cQ\psix\|^2}\langle\cQ\psix, F_P(E(\di)) \cQ\psix \rangle $. As a consequence, using \eqref{eq:PHP} we find
\begin{equation}\label{eq:Erexp}
  \Imp(\di,E(\di))=0,
\end{equation}
where
\begin{equation}\label{eq:ImprE}
\Imp(\di,E) =E- \langle\psix,\Hx\psix\rangle + A,
\end{equation}
and 
\begin{equation}\label{def:A}
A=A(r,E):= \frac{1}{\|\cQ\psix\|^2}\left\langle \Px^\bot\Hx\cQ\psix, \left(\Hx^{a,\bot}-E\right)^{-1}\Px^\bot\Hx\cQ\psix \right\rangle.
\end{equation}
We will prove that $\Imp(\di,E)$ is twice continuously differentiable near  $E(\di)$ and we will apply the implicit function theorem to obtain information about the first two 	 derivatives of $E(\di)$. 

\subsection{Estimate of $\langle\psix,\Hx\psix\rangle$}\label{sec:lin}
We will now prove that
\begin{equation}\label{eq:psiHpsi}
\langle\psix,\Hx\psix\rangle = E(\infty) -\frac{\sigma}{\di^6} + \Odi\left(\frac{1}{\di^7}\right).
\end{equation}
Let $z_i=x_i, z_j=x_j-\di e_1$ for $i =\{1,\dots,N_1\}$ and 
$j=\{N_1+1\dots,N_1+N_2\}$, respectively. The variables $z_i$ are the relative coordinates of the electrons with respect to the nuclei. Below we add a tilde in the notation to indicate that we work in the variables $z_i$. We define
\begin{equation}
\psiz(z_1,\dots,z_{N_1+N_2}) = \psix(z_1,\dots,z_{N_1},z_{N_1+1} +\di e_1,\dots,z_{N_1+N_2}+\di e_1),
\end{equation}
and 
\begin{equation}\label{eq:decom}
\Hz = \Hz_0 + \Iz,
\end{equation}
with
\begin{multline*}\label{def:Hz0}
\Hz_0 = \sum_{i=1}^{N_1}\left(-\Delta_{z_i} - \frac{N_1}{|z_i|}\right) + \sum_{1\leq k < l \leq N_1}\left(\frac{1}{|z_k - z_l|}\right)  \\+ \sum_{j=N_1+1}^{N_1+N_2}\left(-\Delta_{z_j} - \frac{N_2}{|z_j|}\right) + \sum_{N_1+ 1\leq m < n \leq N_1 +N_2}\left(\frac{1}{|z_m - z_n|}\right),
\end{multline*}
\begin{equation}\label{def:Iz}
\Iz = \sum_{i=1}^{N_1}\sum_{j=N_1+1}^{N_1+N_2}\Izij, \quad
\Izij = \frac{1}{\di} + \frac{1}{|-z_i+\di e_1 + z_j|} - \frac{1}{|\di e_1 + z_j|} - \frac{1}{|\di e_1 -z_i|}.
\end{equation}
It then follows that
\begin{equation}\label{eq:PHE0P}
 \langle\psix,(\Hx-E(\infty))\psix\rangle = \langle\psiz,(\Hz-E(\infty))\psiz\rangle.
\end{equation}
Note that $\psiz = \|\psiz_0\|^{-1}\psiz_0$, where
\begin{equation}\label{def:psiz0}
\psiz_0 = \pp - \chiz\Rz\Iz\pp,
\end{equation}
with
\begin{equation*}
\chiz(z_1,\dots,z_{N_1+N_2}) = \chix(z_1,\dots,z_{N_1},z_{N_1+1} +\di e_1,\dots,z_{N_1+N_2}+\di e_1)
\end{equation*}
and
\begin{equation}
\Rz = (\Pz_0^{\bot}\Hz_0\Pz_0^{\bot}-E(\infty))^{-1}, \qquad 
\Pz_0 = |\pp\rangle\langle\pp|.\label{def:Pz0}
\end{equation}
Therefore, from \eqref{eq:PHE0P} and \eqref{def:psiz0} we find
\begin{equation}\label{eq:N1N2N3}
\langle\psix,(\Hx-E(\infty))\psix\rangle =\frac{1}{\|\psiz_0\|^2}(L_1+L_2+L_3),
\end{equation}
where
\begin{align*}
L_1 &= \langle\pp,(\Hz-E(\infty))\pp\rangle,
\\L_2 &= -2Re\langle\chiz\Rz \Iz\pp,(\Hz-E(\infty))\pp\rangle,
\\L_3 &= \langle\chiz\Rz \Iz\pp, (\Hz-E(\infty))\chiz\Rz \Iz\pp\rangle.
\end{align*}
Note that \eqref{eq:cutoffsupp} implies that 
\begin{equation}\label{eq:suppchiz}
\chiz=1 \text{ on } {\rm supp}(\pp),
\end{equation}
and therefore
\begin{equation}\label{eq:N2}
L_2 = -2Re\langle\Rz \Iz\pp,(\Hz-E(\infty))\pp\rangle.
\end{equation}
In $L_1,L_2,L_3$ the only $\di$-dependence is in the term $\Iz$, as the cut-off of the ground states is $\di$-independent, see \eqref{def:phi}. We next prove a lemma whose statements we will repeatedly need throughout the proof.
Before we state it we note that Newton's theorem~\cite[Sec.~9.7]{LL} implies~\cite[Lem.~5.1]{A} that
\begin{equation}\label{eq:Ipp}
\left\langle \pp , \Iz \pp   \right\rangle = 0.
\end{equation}
\begin{lemma}\label{lem:Ipsi0est}
	Let $f$ be as in \eqref{def:fij}. Then
	\begin{equation}\label{Ifnocutoff}
	\left(\Iz - \frac{1}{\di^3}f\right)\pp =\Odi\left(\frac{1}{\di^4}\right)
	\end{equation}
	and
	\begin{equation}
	\left\|\frac{d^\alpha}{d\di^\alpha}\chiz\Iz\right\|_{L_\infty} = O\left(\frac{1}{\di^{\alpha+1}}\right), \label{eq:chizIder}
	\end{equation}
for all $\alpha\geq0$. Moreover	
	\begin{equation}\label{eq:almosteig}
	\Hz_0 \pp = E(\infty) \pp + \Odi(e^{-c\di}) \text{ for some } c >0
	\end{equation}
	and
	\begin{equation}\label{est:HzE0}
	\left(\Hz-E(\infty)\right)\psiz = \Od^\infty\left(\frac{1}{\di^4}\right).
	\end{equation}
\end{lemma}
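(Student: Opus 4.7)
The plan is to prove the four estimates in order, exploiting that all $\di$-dependence is concentrated in the multiplicative operator $\Iz$ (both $\chiz$ and $\pp$ are $\di$-independent because $\di_0$ is held fixed in the local analysis). The main algebraic input is the multipole expansion of $1/|\di e_1+w|$ in inverse powers of $\di$; the analytic inputs are the exponential decay \eqref{eq:expdec} of each $\groundst_j$ and the fact that $\Rz$ inverts $\Pz_0^\bot\Hz_0\Pz_0^\bot-E(\infty)$ on $\Ran\Pz_0^\bot$ with norm bounded uniformly in $\di$ by the spectral gap.

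To prove \eqref{Ifnocutoff} I Taylor expand each of the three terms $1/|\di e_1+w|$ of $\Izij$ (with $w\in\{z_j-z_i,\,z_j,\,-z_i\}$) in inverse powers of $\di$: the $1/\di$ terms and the first-order $w\cdot e_1/\di^2$ terms each cancel across the four summands of $\Izij$, while the second-order contributions $(3(w\cdot e_1)^2-|w|^2)/(2\di^3)$ combine algebraically to give exactly $f_{ij}/\di^3$. The cubic Taylor remainder is $O((|z_i|^3+|z_j|^3)/\di^4)$ on $\supp\pp$, and since \eqref{eq:expdec} implies that the polynomial moments $\||z_i|^k\pp\|_{L^2}$ are uniformly bounded in $\di$, this yields $\|(\Iz-f/\di^3)\pp\|_{L^2}=O(1/\di^4)$; each $\di$-derivative of the explicit $1/\di^k$ factors contributes one extra $1/\di$, giving the $\Odi$ conclusion. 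For \eqref{eq:chizIder}, on $\supp\chiz$ every $|z_i|\leq 7\di_0/36<\di/2$, so every denominator $|\di e_1+w|\geq\di/2$, giving $|\Izij|=O(1/\di)$ pointwise, and each additional $\di$-derivative of $1/|\di e_1+w|$ produces one more factor of $1/\di$. For \eqref{eq:almosteig}, write $\Hz_0\pp=(\Hx_1\phi_1)\otimes\phi_2+\phi_1\otimes(\Hx_{2,0}\phi_2)$; each $\Hx_j\phi_j=E_j\phi_j-c_j[\Delta,\chi_j]\groundst_j$ where $\chi_j=\prod_i\chi_1(x_i/\di_0)$, and since $[\Delta,\chi_j]$ is supported where some $|x_i|\geq\di_0/7$, \eqref{eq:expdec} yields an $L^2$ bound of $O(e^{-c\di_0})=O(e^{-c'\di})$. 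The expression is $\di$-independent, so its $\di$-derivatives vanish, trivially giving $\Odi(e^{-c\di})$.

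The heart of the lemma is \eqref{est:HzE0}, which follows from an exact cancellation. Expanding,
\begin{equation*}
(\Hz-E(\infty))\psiz_0 = (\Hz_0-E(\infty))\pp + \Iz\pp - (\Hz_0-E(\infty))\chiz\Rz\Iz\pp - \Iz\chiz\Rz\Iz\pp.
\end{equation*}
Commuting $\Hz_0-E(\infty)$ past $\chiz$ and using $\Pz_0^\bot\Iz\pp=\Iz\pp$ (by \eqref{eq:Ipp}) gives $(\Hz_0-E(\infty))\Rz\Iz\pp=\Iz\pp$; together with $\chiz\Iz\pp=\Iz\pp$ (valid because $\chiz=1$ on $\supp\pp$), one finds $(\Hz_0-E(\infty))\chiz\Rz\Iz\pp=\Iz\pp+[\Hz_0,\chiz]\Rz\Iz\pp$, and the two $\Iz\pp$ terms cancel. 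This leaves
\begin{equation*}
(\Hz-E(\infty))\psiz_0 = (\Hz_0-E(\infty))\pp - [\Hz_0,\chiz]\Rz\Iz\pp - \Iz\chiz\Rz\Iz\pp.
\end{equation*}
The first term is $\Odi(e^{-c\di})$ by \eqref{eq:almosteig}. For the commutator term, $[\Hz_0,\chiz]=-[\Delta,\chiz]$ is a first-order operator supported where some $|z_i|\geq\di_0/6$; since $\Iz\pp$ is compactly supported near the origin and $E(\infty)$ lies strictly below the essential spectrum of $\Pz_0^\bot\Hz_0\Pz_0^\bot$, a Combes--Thomas/Agmon resolvent estimate gives pointwise exponential decay of $\Rz\Iz\pp$ and its first derivatives, so this term is $O(e^{-c\di})$. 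The third term is bounded via $\|\Iz\chiz\|_{L^\infty}\|\Rz\|\|\Iz\pp\|_{L^2}=O(1/\di)\cdot O(1)\cdot O(1/\di^3)=O(1/\di^4)$ using \eqref{eq:chizIder} and \eqref{Ifnocutoff}. For $\di$-derivatives of order $\alpha$ I apply Leibniz to the two factors of $\Iz$ (the only $\di$-dependent parts), using \eqref{Ifnocutoff} on $(d/d\di)^{\alpha_1}(\Iz\pp)$ and \eqref{eq:chizIder} on $(d/d\di)^{\alpha_2}(\Iz\chiz)$ with $\alpha_1+\alpha_2=\alpha$; each resulting term carries $1/\di^{4+\alpha}$. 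Finally, $\langle\pp,\chiz\Rz\Iz\pp\rangle=\langle\pp,\Rz\Iz\pp\rangle=0$ gives $\|\psiz_0\|^2=1+\|\chiz\Rz\Iz\pp\|^2=1+\Odi(1/\di^6)$, so the normalization $\psiz=\psiz_0/\|\psiz_0\|$ preserves the $\Odi(1/\di^4)$ bound.

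The main obstacle will be the Combes--Thomas/Agmon input needed to control $[\Hz_0,\chiz]\Rz\Iz\pp$: one must verify that the pointwise exponential decay of $\Rz u$ and its first derivatives on the annular region $\supp\nabla\chiz$, located at distance $\gtrsim\di_0$ from $\supp u$ with $u=(d/d\di)^\alpha(\Iz\pp)$, is uniform in $\di$ and in $\alpha$. This rests on the uniform spectral gap between $E(\infty)$ and the rest of the spectrum of $\Pz_0^\bot\Hz_0\Pz_0^\bot$, and is standard.
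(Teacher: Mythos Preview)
Your overall strategy coincides with the paper's: the same four-term decomposition of $(\Hz-E(\infty))\psiz_0$, the same cancellation of $\Iz\pp$ via $\chiz\Iz\pp=\Iz\pp$, and the same treatment of the third term $\Iz\chiz\Rz\Iz\pp$ via $\|\chiz\Iz\|_{L^\infty}\cdot\|\Rz\|\cdot\|\Iz\pp\|$. One small correction: the identity $(\Hz_0-E(\infty))\Rz\Iz\pp=\Iz\pp$ is only true modulo $\Pz_0\Hz_0\Pz_0^\bot\Rz\Iz\pp=O(e^{-c\di})$, since $\Rz$ inverts $\Pz_0^\bot\Hz_0\Pz_0^\bot-E(\infty)$, not $\Hz_0-E(\infty)$; this is harmless but should be said.

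The substantive divergence is your handling of the commutator term $[\Hz_0,\chiz]\Rz\Iz\pp$, which you identify as ``the main obstacle'' and propose to control by Combes--Thomas/Agmon decay of $\Rz\Iz\pp$ on $\supp\nabla\chiz$. This is both unnecessary and shaky as stated. It is unnecessary because the paper simply uses the elementary operator-norm bound: $[\Hz_0,\chiz]=(\Delta\chiz)+2\nabla\chiz\cdot\nabla$ with $\|\nabla\chiz\|_\infty=O(1/\di)$, $\|\Delta\chiz\|_\infty=O(1/\di^2)$, and $\nabla\Rz$ bounded, so $\|[\Hz_0,\chiz]\Rz\|=O(1/\di)$; combined with $\|\Iz\pp\|=O(1/\di^3)$ this gives $O(1/\di^4)$ directly. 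Since $[\Hz_0,\chiz]\Rz$ is $\di$-independent, $\di$-derivatives hit only $\Iz\pp$ and \eqref{Ifnocutoff} finishes. It is shaky because your claimed separation of $\supp\nabla\chiz$ from $\supp(\Iz\pp)$ by distance $\gtrsim\di_0$ is not guaranteed by the construction: $\chiz$ is built precisely so that $\chiz=1$ on $\supp\pp$, i.e.\ $\nabla\chiz$ lives on $\{|z_i|\ge\di_0/6\}$ while $\supp\pp\subset\{|z_i|\le\di_0/6\}$, and these regions touch. Moreover, a Combes--Thomas estimate for $\Rz=(\Pz_0^\bot\Hz_0\Pz_0^\bot-E(\infty))^{-1}$ is not immediate since $\Pz_0^\bot$ is nonlocal (and $E(\infty)$ lies in the spectrum of $\Hz_0$ itself). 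Replacing this step by the operator-norm argument removes your ``main obstacle'' entirely.
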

\begin{proof}
	We first prove \eqref{Ifnocutoff}. Note that we do not have to deal with the singularities of $\Iz$ as  they are away from the support of $\pp$.
	Using Taylor's theorem we obtain that for all $z  \in \mathbb{R}^3, r>0$ with  $|z| \leq \frac{|\di|}{2}$ we have
	\begin{equation}\label{eq:proimplicit}
		\frac{1}{|\di e_1-z|}= \frac{1}{\di}+\frac{z \cdot e_1}{\di^2}+\frac{3(z \cdot e_1)^2-|z|^2}{2 \di^3} + g(z,\di)
	\end{equation}
	where
	\begin{equation}\label{Appeq:gzr}
		g(z,\di)= \int_0^1 \!ds_1 \int_0^{s_1} \!ds_2 \int_0^{s_2} \!ds_3 \frac{15(z \cdot \widehat{(\di e_1-s_3 z)})^3-
			9(z \cdot \widehat{(\di e_1-s_3 z)}) |z|^2)}{|\di e_1-s_3 z|^4}.
	\end{equation}
	Using  \eqref{eq:proimplicit} together with \eqref{def:Iz} and \eqref{def:fij} one can show that
	\begin{multline*}
		(\Iz - \frac{1}{\di^3}f) \pp (z_1,\dots,z_{N_1+N_2})\\= \sum_{i=1}^{N_1}\sum_{j=N_1+1}^{N_1+N_2} (g(z_i,\di)+g(-z_j,\di)-g(z_i-z_j,\di) )\pp.
	\end{multline*} 
	Using the last equality and \eqref{def:phi} we arrive at 
	\begin{multline*}
		\frac{d^\alpha}{d \di^\alpha}  \left[(\Iz - \frac{1}{\di^3}f)\pp\right](z_1,\dots,z_{N_1+N_2}) \\ 
		\label{eq:intrest} = \sum_{i=1}^{N_1}\sum_{j=N_1+1}^{N_1+N_2} \left[\frac{d^\alpha}{d \di^\alpha} \Big(g(z_i,\di)+g(-z_j,\di)-g(z_i-z_j,\di) \Big)\right]\pp,
	\end{multline*}
	for every integer $\alpha \geq0$.
	But applying the dominated convergence theorem we obtain that  
	\begin{equation*}\label{Appeq:gzrder}
		\left[\frac{d^\alpha}{d \di^\alpha} \big(g(z_i,\di)+g(-z_j,\di)-g(z_i-z_j,\di) \big)\right]\pp = O\left(\frac{1}{\di^{4+\alpha}}\right)
	\end{equation*}
	where the exponential decay of $\groundst$, namely \eqref{eq:expdec}, is needed for establishing \eqref{Appeq:gzrder}. It follows that
	\begin{equation*}
		\frac{d^\alpha}{d\di^\alpha}\left[(\Iz-\frac{1}{\di^3}f)\pp\right] = O\left(\frac{1}{\di^{4+\alpha}}\right),
	\end{equation*}
 so that we  arrive at \eqref{Ifnocutoff}.

 The estimate \eqref{eq:chizIder} can be proven in a similar manner. The only difference is that instead of \eqref{eq:proimplicit} and \eqref{Appeq:gzr} we use
	\begin{equation*}
		\frac{1}{|\di e_1 - z|} = \frac{1}{\di} +\int_{0}^{1}\frac{z\cdot\widehat{(\di e_1-s_1 z)}}{|\di e_1-s_1 z|^2}ds_1.
	\end{equation*}

In order to prove \eqref{eq:almosteig} we observe that
	\begin{equation}\label{eq:proofalmosteig}
		(\Hz_0-E(\infty))\pp = \frac{-2\nabla\chiz\nabla(\gsi\otimes\gsii) -\Delta\chiz(\gsi\otimes\gsii)}{\|\chiz(\gsi\otimes\gsii)\|}.
	\end{equation}
	Since $\nabla\chiz$ is supported $O(\di)$ far from the nuclei of the atoms, using \eqref{eq:expdec} and \eqref{eq:proofalmosteig} we find that
	\begin{equation*}
	\|(\Hz_0-E(\infty))\pp\|_{H^1} =O(e^{-cr}).
	\end{equation*}
	Since moreover, if we vary $\di$ on the left hand side the latter does not change, we arrive at \eqref{eq:almosteig}.

	We now prove \eqref{est:HzE0}. From \eqref{eq:decom} we obtain that
	\begin{multline}\label{eq:HzE0psiz}
	(\Hz-E(\infty))\psiz_0 = (\Hz_0-E(\infty))\pp - (\Hz_0-E(\infty))\chiz\Rz\Iz\pp\\ + \Iz\pp - \Iz\chiz\Rz\Iz\pp.
	\end{multline}
	Therefore using \eqref{Ifnocutoff}, \eqref{eq:chizIder} and \eqref{eq:almosteig} we find
	\begin{equation}\label{eq:HminusEinf}
	(\Hz-E(\infty))\psiz_0 =  - (\Hz_0\Pz_0^\bot-E(\infty))\chiz\Rz\Iz\pp + \Iz\pp +\Od^\infty\left(\frac{1}{\di^4}\right),
	\end{equation}
	where we recall that $\Pz_0$ was defined in \eqref{def:Pz0}, and we have also used that $\Iz\pp = \Pz_0^\bot\Iz\pp$, see \eqref{eq:Ipp}, and that $[\Pz_0,\chiz\Rz] = 0$. Since $\chiz\Rz = \Rz\chiz + [\chiz,\Rz]$ we find using the equality
	
	\begin{equation}\label{eq:Hz0Pzbot}
	(\Hz_0\Pz_0^\bot-E(\infty))\Rz = 1 + \Pz_0\Hz_0\Pzbot_0\Rz = 1 + \Od^\infty(e^{-c\di}),
	\end{equation}
	that
	\begin{multline}
	\label{eq:commchizRz}
	(\Hz_0\Pz_0^\bot-E(\infty))\chiz\Rz\Iz\pp =  \chiz\Iz\pp\\ + (\Hz_0\Pz_0^\bot-E(\infty))[\chiz,\Rz]\Iz\pp + \Od^\infty(e^{-c\di}).
	\end{multline}  
	From \eqref{eq:suppchiz} we obtain that 
	\begin{equation}
	\label{eq:chizIz}
	\chiz\Iz\pp = \Iz\pp.
	\end{equation}
	Using \eqref{eq:HminusEinf}, \eqref{eq:commchizRz} and \eqref{eq:chizIz} we infer
	\begin{equation*}
	(\Hz-E(\infty))\psiz_0 = - (\Hz_0\Pz_0^\bot-E(\infty))\Rz\Pz_0^\bot[\chiz,\Hz_0]\Pz_0^\bot\Rz\Iz\pp +\Od^\infty\left(\frac{1}{\di^4}\right).
	\end{equation*}
	Since $[\chiz,\Hz_0]\Rz = O\left(\di^{-1}\right)$ is constant in a neighborhood of $r_0$, we arrive with the help of \eqref{Ifnocutoff} at
	\begin{equation}
	\label{eq:HzEpsiz0}
	(\Hz-E(\infty))\psiz_0 = \Odi\left(\frac{1}{\di^4}\right).
	\end{equation}	
	Using \eqref{def:psiz0} and \eqref{Ifnocutoff} we obtain that
	\begin{equation}
	\label{est:psi0norm}
	\|\psiz_0\|^2 = 1 + \Od^\infty\left(\frac{1}{\di^6}\right),
	\end{equation}
	where we have used the orthogonality of the two summands on the right hand side of \eqref{def:psiz0}. Using \eqref{est:psi0norm} and
	  \eqref{eq:HzEpsiz0}  we conclude the proof of \eqref{est:HzE0} hence of Lemma \ref{lem:Ipsi0est}.
	\end{proof}

We next estimate $L_1,L_2,L_3$.
Using \eqref{eq:decom}, \eqref{eq:Ipp} and \eqref{eq:almosteig} we find that
\begin{equation}
\label{eq:PHPN1}
L_1 = \Odi(e^{-c\di}), \text{ for some } c>0.
\end{equation}
Now we want to show that 
\begin{equation}\label{eq:PHPN2}
L_2 = -2\frac{1}{\di^6}\langle f\pp, \Rz f\pp\rangle + \Odi\left(\frac{1}{\di^7}\right).
\end{equation}
Using \eqref{eq:decom} and \eqref{eq:N2} we find
\begin{equation*}
L_2 = -2 \langle\Iz\pp, \Rz\Iz\pp\rangle - 2\Re \langle\Rz\Iz\pp, (\Hz_0 - E(\infty)) \pp\rangle.
\end{equation*}
Further using \eqref{Ifnocutoff} and \eqref{eq:almosteig} we can show that
\begin{equation*}
L_2 = -2 \langle\Iz\pp,\Rz\Iz\pp\rangle+O^\infty_d(e^{-cr}).
\end{equation*}
To arrive at \eqref{eq:PHPN2}, it remains for us to prove that
\begin{equation}\label{eq:IRI}
\langle\Iz\pp,\Rz\Iz\pp\rangle - \frac{1}{\di^6}\langle f\pp, \Rz f\pp\rangle = \Odi\left(\frac{1}{\di^7}\right).
\end{equation}
Indeed, observe that the left hand side of \eqref{eq:IRI} can be split into
\begin{equation*}
\langle\Iz\pp-\frac{f}{\di^3}\pp,\Rz\Iz\pp\rangle + \langle\frac{f}{\di^3}\pp,\Rz(\Iz\pp-\frac{f}{\di^3}\pp)\rangle.
\end{equation*}
Using \eqref{Ifnocutoff}, the estimate \eqref{eq:IRI} follows and therefore we obtain \eqref{eq:PHPN2}.

We now estimate $L_3$. With the help of Leibniz rule for the kinetic part of $\Hz$ and \eqref{eq:decom}, we obtain
\begin{equation}\label{def:PHPN3}
L_3 = L_{31} + L_{32} + L_{33} ,
\end{equation}
where
\begin{align*}
L_{31} &=   \langle\chiz\Rz\Iz\pp, \chiz(\Hz_0-E(\infty))\Rz\Iz\pp\rangle,\label{def:PHPN31}
\\L_{32} &= \langle\chiz\Rz\Iz\pp, \Iz\chiz\Rz\Iz\pp\rangle - \langle\chiz\Rz\Iz\pp, (\Delta\chiz)\Rz\Iz\pp\rangle,
\\L_{33} &= -2\langle\chiz\Rz\Iz\pp, \nabla\chiz\cdot\nabla\Rz\Iz\pp\rangle.
\end{align*}
Observe that \eqref{eq:Ipp} gives $\Iz\pp = \Pz_0^\bot\Iz\pp$ and moreover $\Pz_0^\bot$ commutes with $\Rz$ and $\chiz$, because of \eqref{def:Pz0} and \eqref{eq:suppchiz}, respectively. Thus, $\Hz_0$  can be replaced by $\Pz_0^\bot\Hz_0\Pz_0^\bot$. It follows that
\begin{equation*}
L_{31} = \langle\chiz\Rz\Iz\pp, \chiz\Iz\pp\rangle,
\end{equation*}
where we have also used that $(\Pz_0^\bot\Hz_0\Pz_0^\bot- E(\infty))\Rz=1$. Using \eqref{eq:suppchiz} once more we find that
\begin{equation*}
L_{31} = \langle\Iz\pp, \Rz\Iz\pp\rangle.
\end{equation*}
Hence with \eqref{eq:IRI} we come to the conclusion that
\begin{equation}
\label{eq:PHPN31}
L_{31} = \frac{1}{\di^6}\langle f\pp, \Rz f\pp\rangle + \Odi\left(\frac{1}{\di^7}\right).
\end{equation}
Observe now that
\begin{equation}\label{est:nablachiz}
|\nabla\chiz|\leq\frac{c}{\di}, \qquad |\Delta\chiz|\leq\frac{c}{\di^2},
\end{equation}
Using \eqref{Ifnocutoff}, \eqref{eq:chizIder} and \eqref{est:nablachiz} we find that
\begin{equation*}
\label{eq:PHPN32}
L_{32} = \Odi\left(\frac{1}{\di^7}\right), 
\qquad L_{33} = \Odi\left(\frac{1}{\di^7}\right),
\end{equation*}
where we have also used that  $\nabla\Rz$ is bounded.
Together with \eqref{def:PHPN3}, \eqref{eq:PHPN31} we find that 
\begin{equation}\label{eq:PHPN3}
L_3 = \frac{1}{\di^6}\langle f \pp, \Rz f \pp\rangle + \Odi\left(\frac{1}{\di^7}\right).
\end{equation}
From \eqref{eq:PHPN1}, \eqref{eq:PHPN2} and \eqref{eq:PHPN3} we find
\begin{equation*}\label{eq:PHPsumN}
L_1 + L_2 + L_3 = -\frac{1}{\di^6} \langle f\pp, \Rz f \pp\rangle + \Odi\left(\frac{1}{\di^7}\right).
\end{equation*}
It is known (see for example~\cite[Proof of Lemma 5.6]{A}) that
\begin{equation*}
\langle f \pp, \Rz f \pp \rangle =\sigma+O(e^{-cr}),
\end{equation*}
and since the left hand side is $\di$-independent, we find
\begin{equation*}\label{eq:PHPdoteqdsig}
\langle f \pp, \Rz f \pp \rangle = \sigma + \Od^\infty(e^{-c\di}).
\end{equation*}
We conclude that
\begin{equation}
\label{eq:PHPsig}
\left(L_1 + L_2 + L_3\right) = -\frac{\sigma}{\di^6} + \Odi\left(\frac{1}{\di^7}\right).
\end{equation}
From \eqref{eq:N1N2N3}, \eqref{est:psi0norm} and \eqref{eq:PHPsig} we arrive at \eqref{eq:psiHpsi}.

\subsection{Nonlinear term}
We now focus on the nonlinear term $A$ defined in \eqref{def:A}. To apply the implicit function theorem we first fix $E$ near $E(\di)$, such that
\begin{equation}
\label{eq:E-Er}
\left|E-E(\di)\right|<\frac{c}{\di^7}\qquad\text{for some } c>0,
\end{equation}
and we investigate the partial derivatives of $A$ with respect to $\di$. Below we will write $\frac{d}{d\di}$ for expressions  that do not depend on $E$ but only on $\di$. 

 Since $P^\bot \cQ \psi=0$ we find
\begin{equation}\label{eq:nonlin}
A=\frac{1}{\|\cQ\psix\|^2}\left\langle \Px^\bot(\Hx - E(\infty))\cQ\psix, \left(\Hxabot-E\right)^{-1}\Px^\bot(\Hx - E(\infty))\cQ\psix \right\rangle.
\end{equation}
Our goal is to prove that
\begin{equation}\label{est:Aalpha}
A = O\left(\frac{1}{\di^8}\right),
\quad \frac{\partial^\alpha}{\partial r^\alpha} A =
O\left(\frac{1}{\di^9}\right),\qquad  \text{ for } \alpha\in \{1,2\}.
\end{equation} 

Using that $\Px^\bot$ is a projection commuting with $\Hxabot$ and $\cQ$ is a projection commuting with $\Hx,\Pxbot$ and $\Hxabot$, we find 
\begin{equation*}
A =\frac{1}{\|\cQ\psix\|^2}\left\langle(\Hx - E(\infty))\psix, \left(\Hxabot-E\right)^{-1}\Px^\bot(\Hx - E(\infty))\cQ\psix \right\rangle.
\end{equation*}
From the fact that
\begin{equation*}
\Pxbot = 1 - \frac{|\cQ\psix\rangle\langle\cQ\psix|}{\|\cQ\psix\|^2},
\end{equation*}
and that
\begin{equation*}
\langle\cQ\psix,(\Hx-E(\infty))\cQ\psix\rangle = \langle\psix,(\Hx-E(\infty))\psix\rangle\|\cQ\psix\|^2,
\end{equation*}
we find
\begin{equation*}%\label{eq:ABC}
A = - B + C,
\end{equation*}
with $B = M \langle\psix,(\Hx-E(\infty))\psix\rangle$, 
and
\begin{equation}\label{def:C}
C=\frac{1}{\|\cQ\psix\|^2}\langle(\Hx-E(\infty))\psix,(\Hxabot-E)^{-1}(\Hx-E(\infty))\cQ\psix\rangle.
\end{equation}
In the definition of $B$,
\begin{equation}\label{def:M}
M = \frac{1}{\|\cQ\psix\|^2} \langle (\Hx-E(\infty))\psix,(\Hxabot-E)^{-1}\cQ\psix\rangle.
\end{equation}
We now estimate $B$. Due to \eqref{eq:psiHpsi} we have 
\begin{equation}\label{def:B}
B = M\;\Odi\left(\frac{1}{\di^6}\right).
\end{equation}
We will show that 
\begin{equation}
\label{est:Mderalpha}
\frac{\partial^\alpha}{\partial\di^\alpha}M = O\left(\frac{1}{\di^4}\right), \qquad \text{for } \alpha \in\{0,1,2\}.
\end{equation}
To this end we will prove that
\begin{equation}
\label{est:Hxpsixder}
\frac{d^\alpha}{d\di^\alpha}(\Hx-E(\infty))\psix = O\left(\frac{1}{\di^4}\right), \qquad  \text{for } \alpha\in\{0,1\},
\end{equation}
and
\begin{equation}
\label{est:Hxpsixder2}
\frac{d}{d\di}(1-\Delta)^{-\frac{1}{2}}\frac{d}{d\di}(\Hx-E(\infty))\psix = O\left(\frac{1}{\di^4}\right).
\end{equation}
An elementary computation gives that
\begin{equation}
\label{eq:Hxtaur}
(\Hx-E(\infty))\psix = \tau_\di\left((\Hz-E(\infty))\psiz\right),
\end{equation}
where 
\begin{equation}\label{def:tau}
(\tau_\di\Phi)(x_1,\dots,x_{N_1+N_2}) := \Phi(x_1,\dots,x_{N_1},x_{N_1+1}-\di e_1,\dots,x_{N_1+N_2}-\di e_1).
\end{equation}
The following two lemmata are going to be useful

\begin{lemma}\label{lem:Mrder}
	Let $a,b \in \R$ with $a<b$. We consider $\Phi:(a,b) \rightarrow L^2$  differentiable with $\Phi(r) \in H^1(\R^{3(N_1+N_2)})$ for all $\di \in (a,b)$  and 
$$\vec{v}=(\underbrace{0,\dots,0}_{3N_1 \text{times}},\underbrace{-1,0,0,-1,0,0,\dots,-1,0,0}_{N_2 \text{times}})^\top.$$ 
Then $\tau_. \Phi(.):(a,b) \rightarrow L^2$ is differentiable and
	\begin{equation}\label{eq:taurPhider}
	\frac{d}{d\di}\left(\tau_\di \Phi(\di)\right) = (\vec{v}\cdot\nabla)\left(\tau_\di \Phi(\di)\right) + \tau_\di \left(\Phi'(\di)\right).
	\end{equation}
\end{lemma}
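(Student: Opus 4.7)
The claim is essentially a product rule in disguise, and the plan is to derive it from the natural splitting
\begin{equation*}
\tau_{r+h}\Phi(r+h)-\tau_r\Phi(r)=\tau_{r+h}\bigl(\Phi(r+h)-\Phi(r)\bigr)+\bigl(\tau_{r+h}-\tau_r\bigr)\Phi(r),
\end{equation*}
and to analyze each piece separately after dividing by $h$. The first piece captures the $r$-dependence of $\Phi$ (with $\tau$ frozen), while the second captures the $r$-dependence of the translation $\tau_r$ (with $\Phi$ frozen).

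For the first piece, I would use that the operators $\tau_r$ are unitary on $L^2$ (they are translations, up to a sign in the first component of the last $N_2$ electron variables) and that the one-parameter family $\{\tau_r\}_{r\in\R}$ is strongly continuous on $L^2$. Since $\Phi$ is assumed differentiable from $(a,b)$ to $L^2$, the difference quotient $h^{-1}(\Phi(r+h)-\Phi(r))$ converges to $\Phi'(r)$ in $L^2$, and combining these two facts gives
\begin{equation*}
\frac{1}{h}\,\tau_{r+h}\bigl(\Phi(r+h)-\Phi(r)\bigr)\xrightarrow[h\to 0]{L^2}\tau_r\,\Phi'(r).
\end{equation*}
For the second piece, the $H^1$-hypothesis on $\Phi(r)$ is crucial. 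Observing that $\tau_r$ shifts only the variables $x_{N_1+1},\dots,x_{N_1+N_2}$ by $-re_1$ in their first coordinate, one has by the fundamental theorem of calculus, valid in $L^2$ for $\tau_r\Phi(r)\in H^1$,
\begin{equation*}
\bigl(\tau_{r+h}-\tau_r\bigr)\Phi(r)=\int_0^h (\vec v\cdot\nabla)\bigl(\tau_{r+s}\Phi(r)\bigr)\,ds,
\end{equation*}
where $\vec v$ is exactly as stated. Strong continuity of $\tau_{r+s}$ in $s$ combined with $\nabla\Phi(r)\in L^2$ yields continuity of the integrand in $L^2$, so dividing by $h$ and passing $h\to 0$ gives $(\vec v\cdot\nabla)(\tau_r\Phi(r))$ in $L^2$.

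Adding the two limits produces~\eqref{eq:taurPhider} and simultaneously proves differentiability of $r\mapsto\tau_r\Phi(r)$. The main technical step is the second piece: we must ensure that the finite-dimensional directional derivative of the translation, formally $(\vec v\cdot\nabla)\tau_r\Phi(r)$, is actually the $L^2$-derivative of $r\mapsto\tau_r\Phi(r)$ at each fixed~$r$. This is handled cleanly by the integral representation above, which only requires $\Phi(r)\in H^1$ and avoids any density or mollification argument; alternatively one could first verify the identity on $C_c^\infty$ and extend by density using the uniform $L^2$-boundedness of $\tau_r$. The first piece is comparatively routine since $\tau_r$ is an isometry.
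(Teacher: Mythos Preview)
Your proof is correct and follows essentially the same approach as the paper: the same splitting of the difference quotient, strong continuity of $\tau_{r+h}$ for the first piece, and the generator-of-translations property (which you make explicit via the fundamental theorem of calculus) together with $\Phi(r)\in H^1$ for the second piece. The only difference is that the paper invokes the generator property in one line where you spell it out as an integral representation.
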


\begin{proof}
	To prove \eqref{eq:taurPhider} we first observe that
	\begin{multline*}%\label{eqlem:limtauPhi}
	\lim_{h \rightarrow 0} \frac{\tau_{\di+h}\Phi(\di+h) - \tau_\di \Phi(\di)}{h}\\ = \lim_{h \rightarrow 0} \left(\tau_{\di+h}\frac{\Phi(\di+h)-\Phi(\di)}{h} + \frac{\tau_{\di+h}-\tau_\di}{h}\Phi(\di)\right).
	\end{multline*}
	Since $\tau_{\di+h}$ is strongly continuous in $h$ we have that
	\begin{equation*}\label{eqlem:Phider}
	\lim_{h \rightarrow 0}\tau_{\di+h}\frac{\Phi(\di+h)-\Phi(\di)}{h} =\tau_\di\Phi'(\di).
	\end{equation*}
	Moreover, since the momentum operator is the generator of translations and $\Phi(\di)\in H^1$ we find 
	\begin{equation*}
	\label{eqlem:tauder}
	\lim_{h \rightarrow 0}\frac{\tau_{\di+h}-\tau_\di}{h}\Phi(\di) = (\vec{v}\cdot\nabla)\left(\tau_\di \Phi(\di)\right),
	\end{equation*}
	in the $L^2$-sense. Hence we have proved \eqref{eq:taurPhider}.
\end{proof}

\begin{lemma}\label{lem:PhiH1}
	\begin{equation*}\label{est:PhiH1}
	\|(\Hz -E(\infty))\psiz\|_{H^1} = O \left(\frac{1}{\di^4}\right).
	\end{equation*}
\end{lemma}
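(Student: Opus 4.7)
The plan is to revisit the proof of the $L^2$-estimate \eqref{est:HzE0} from Lemma~\ref{lem:Ipsi0est} and verify that every step in that argument can be carried out in the $H^1$-norm. The algebraic cancellations --- namely $\chiz \Iz \pp = \Iz \pp$ from \eqref{eq:chizIz}, the near-identity relation \eqref{eq:Hz0Pzbot}, and $\Iz\pp \in \Ran(\Pzbot_0)$ from \eqref{eq:Ipp} --- are purely pointwise and survive the upgrade. As in the original derivation, the problem reduces to controlling in $H^1$ the commutator expression
\begin{equation*}
(\Hz - E(\infty))\psiz_0 = -(\Hz_0\Pzbot_0 - E(\infty))[\chiz,\Rz]\Iz\pp - \Iz\chiz\Rz\Iz\pp + \Odi\!\left(\tfrac{1}{\di^4}\right),
\end{equation*}
and then dividing by $\|\psiz_0\| = 1 + O(1/\di^6)$ from \eqref{est:psi0norm}.

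The ingredients I would establish are: (i) $(\Hz_0 - E(\infty))\pp = \Odi(e^{-c\di})$ in $H^1$, since the explicit formula \eqref{eq:proofalmosteig} is supported where $\nabla\chiz \neq 0$, at distance $\gtrsim \di$ from each nucleus, and the exponential decay \eqref{eq:expdec} of $\groundst_j$ up to second-order derivatives absorbs one additional spatial derivative; (ii) $\|(\Iz - f/\di^3)\pp\|_{H^1} = O(1/\di^4)$, obtained by differentiating the Taylor remainder $g(z,\di)$ from \eqref{Appeq:gzr} once in $z$ and noting that $\nabla_z g(z,\di) = O(1/\di^4)$ on the support of $\pp$; (iii) $\|\Rz \Iz \pp\|_{H^2} = O(1/\di^3)$, from elliptic regularity of $\Pzbot_0 \Hz_0 \Pzbot_0 - E(\infty)$ applied to the smooth compactly-supported function $\Iz\pp$, whose $L^2$-norm is $O(1/\di^3)$; and (iv) the pointwise bounds $\|\nabla^k\chiz\|_{L^\infty} = O(\di^{-k})$. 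Using that $\chiz$ commutes with $\Pz_0$ (because $\chiz\pp = \pp$ by \eqref{eq:suppchiz}), the resolvent commutator identity $[\chiz,\Rz] = -\Rz\Pzbot_0[\chiz,\Hz_0]\Pzbot_0\Rz$ combined with (iii), (iv) and the product rule gives $\|[\chiz,\Hz_0]\Rz\Iz\pp\|_{H^1} = O(1/\di^4)$. Applying $(\Hz_0\Pzbot_0 - E(\infty))\Rz = 1 + \Odi(e^{-c\di})$ from \eqref{eq:Hz0Pzbot} then collapses the first summand to $-\Pzbot_0[\chiz,\Hz_0]\Rz\Iz\pp + \Odi(e^{-c\di})$, which is $O(1/\di^4)$ in $H^1$. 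The second summand $\Iz\chiz\Rz\Iz\pp$ is $O(1/\di^6)$ in $H^1$ since $\chiz\Iz$ is a bounded multiplier of order $O(1/\di^3)$ acting on $\Rz\Iz\pp$.

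The main technical obstacle is securing the quantitative $H^2$-bound on $\Rz\Iz\pp$ with constants uniform in $\di$. Elliptic regularity for $\Hz_0$ gives $L^2 \to H^2$ boundedness of $\Rz$ on $\Ran(\Pzbot_0)$, but one must check that the rank-one projection $\Pz_0 = |\pp\rangle\langle\pp|$ does not spoil this mapping property as $\di$ grows. Fortunately $\pp$ is $\di$-independent given fixed $\di_0$, so its $H^2$-norm is bounded uniformly in $\di$, and the necessary regularization follows in a standard way. A secondary point of care is that taking one spatial derivative in the commutator $[\chiz,\Hz_0]\Rz\Iz\pp$ produces a term of the form $\nabla\chiz \cdot \nabla^2(\Rz\Iz\pp)$, which requires the full $H^2$ (rather than merely $H^1$) control of $\Rz\Iz\pp$ established in (iii); this is the sole place where the elliptic regularization is essential.
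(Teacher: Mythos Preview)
Your approach is essentially the same as the paper's: both start from the decomposition \eqref{eq:HzE0psiz}, show the term $(\Hz_0-E(\infty))\pp$ is exponentially small in $H^1$ via \eqref{eq:proofalmosteig}, and reduce matters to controlling the cancellation between $\Iz\pp$ and $(\Hz_0-E(\infty))\chiz\Rz\Iz\pp$. You phrase this cancellation through the resolvent commutator $[\chiz,\Rz]=-\Rz\Pzbot_0[\chiz,\Hz_0]\Pzbot_0\Rz$, whereas the paper applies the Leibniz rule directly to write
\[
(\Hz_0-E(\infty))\chiz\Rz\Iz\pp = \chiz(\Hz_0-E(\infty))\Rz\Iz\pp - 2\nabla\chiz\cdot\nabla\Rz\Iz\pp - (\Delta\chiz)\Rz\Iz\pp,
\]
and then uses \eqref{eq:Hz0Pzbot} together with the smallness of $\Pz_0\Hz_0\Pzbot_0$ in \eqref{eqlem:P0zH0z}. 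The resulting terms are identical, and both arguments rely on the same $H^2$-bound $\|\Rz\Iz\pp\|_{H^2}=O(\di^{-3})$ that you highlight as the key technical ingredient.

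One small slip: you claim $\chiz\Iz$ is a multiplier of order $O(\di^{-3})$, giving $\Iz\chiz\Rz\Iz\pp=O(\di^{-6})$ in $H^1$. In fact $\|\chiz\Iz\|_{L^\infty}=O(\di^{-1})$ by \eqref{eq:chizIder} (the dipole cancellation $\Iz\approx f/\di^3$ holds only after multiplying by $\pp$, not pointwise on $\supp\chiz$), so this term is only $O(\di^{-4})$ in $H^1$, matching the paper's estimate \eqref{estlem:H1IchiRIpp}. This does not affect your conclusion.
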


\begin{proof}
	We start with estimating some terms of the right hand side of \eqref{eq:HzE0psiz}.
	Using \eqref{eq:proofalmosteig} and \eqref{eq:expdec} we find that
	\begin{equation}
	\label{estlem:H1H0zpp}
	\left\|(\Hz_0-E(\infty))\pp\right\|_{H^1} = O(e^{-cr}),
	\end{equation} 
	because the support of the derivatives of $\chiz$ is far from the center of the atoms.
	Since $\chiz\Iz= O\left(\frac{1}{\di}\right)$, see \eqref{eq:chizIder}, and $\nabla(\chiz\Iz)= O\left(\frac{1}{\di^2}\right)$, using \eqref{Ifnocutoff} it follows that
	\begin{equation}
	\label{estlem:H1IchiRIpp}
	\left\|\Iz\chiz\Rz\Iz\pp\right\|_{H^1} = O\left(\frac{1}{\di^4}\right),
	\end{equation} 
	where due to the presence of the resolvent $\Rz$ it is enough that $\|\Iz\pp\|_{L^2}=O(r^{-3})$.	
	Using \eqref{eq:HzE0psiz}, \eqref{estlem:H1H0zpp} and \eqref{estlem:H1IchiRIpp} we obtain
	\begin{equation}
	(\Hz -E(\infty))\psiz = \Iz\pp - (\Hz_0-E(\infty))\chiz\Rz\Iz\pp+ O_{H^1}\left(\frac{1}{\di^4}\right),
	\end{equation}
	with the self-explanatory notation $O_{H^1}$.
	This gives with the help of Leibniz' rule
	\begin{multline*}\label{eqlem:Hzpsizfull}
	(\Hz -E(\infty))\psiz = \Iz\pp -\chiz(\Hz_0-E(\infty))\Rz\Iz\pp\\ + 2\nabla\chiz\cdot\nabla\Rz\Iz\pp + \Delta\chiz\Rz\Iz\pp + O_{H^1}\left(\frac{1}{\di^4}\right).
	\end{multline*}
	From \eqref{eq:suppchiz} and \eqref{eq:Hz0Pzbot} we find that
	\begin{equation*}\label{eqlem:chizHz0}
	\chiz(\Hz_0-E(\infty))\Rz\Iz\pp = \Iz\pp + \Pz_0\Hz_0\Pzbot_0\Rz\Iz\pp.
	\end{equation*}
	It follows that
	\begin{multline*}
	(\Hz -E(\infty))\psiz = -\Pz_0\Hz_0\Pzbot_0\Rz\Iz\pp + 2\nabla\chiz\cdot\nabla\Rz\Iz\pp\\ + \Delta\chiz\Rz\Iz\pp + O_{H^1}\left(\frac{1}{\di^4}\right).
	\end{multline*}
	Since $\Pz_0\Hz_0\Pzbot_0 = \Pz_0(\Hz_0-E(\infty))\Pzbot_0$, we find using \eqref{eq:almosteig} that
	\begin{equation}
	\label{eqlem:P0zH0z}
	\Pz_0\Hz_0\Pzbot_0 =O(e^{-cr}), \qquad \nabla\Pz_0\Hz_0\Pzbot_0 =O(e^{-cr}).
	\end{equation}
This concludes the proof of Lemma~\ref{lem:PhiH1}.
\end{proof}

We now continue with the proofs of \eqref{est:Hxpsixder} and \eqref{est:Hxpsixder2}.
Using \eqref{est:HzE0} and \eqref{eq:Hxtaur} together with Lemma \ref{lem:Mrder} for $\Phi = (\Hz -E(\infty))\psiz $, Lemma \ref{lem:PhiH1} and the translation invariance of the $L^2$ and $H^1$ norms we arrive at \eqref{est:Hxpsixder}.

To prove \eqref{est:Hxpsixder2} observe that from Lemma \ref{lem:Mrder} and \eqref{eq:Hxtaur} we have
\begin{multline*}\label{eq:lapHxEpsix}
(1-\Delta)^{-\frac{1}{2}}\frac{d}{d\di}(\Hx-E(\infty))\psix =(1-\Delta)^{-\frac{1}{2}}(\vec{v}\cdot\nabla)((\Hx-E(\infty))\psix\\ + \tau_\di\left((1-\Delta)^{-\frac{1}{2}}\frac{d}{d\di}\left((\Hz-E(\infty))\psiz\right)\right),
\end{multline*}
Thus, using  the boundedness and translation invariance of $(1-\Delta)^{-\frac{1}{2}}(\vec{v}\cdot\nabla)$ and \eqref{est:Hxpsixder}, we find that
\begin{equation*}
\label{est:dersignabHxE}
\frac{d}{d\di}(1-\Delta)^{-\frac{1}{2}}(\vec{v}\cdot\nabla)(\Hx-E(\infty))\psix = O\left(\frac{1}{\di^4}\right).
\end{equation*}
Using \eqref{est:HzE0} we obtain that
\begin{equation*}
\left\|(1-\Delta)^{-\frac{1}{2}}\frac{d}{d\di}\left((\Hz-E(\infty))\psiz\right)\right\|_{H^1} = O\left(\frac{1}{\di^5}\right)
\end{equation*}
and
\begin{equation*}
\left\| \frac{d^2}{d\di^2}\left((\Hz-E(\infty))\psiz\right) \right\|_{L^2} = O\left(\frac{1}{\di^6}\right).
\end{equation*}
Thus we can apply Lemma \ref{lem:Mrder} for $\Phi(\di) =(1-\Delta)^{-\frac{1}{2}}\frac{d}{d\di}(\Hz-E(\infty))\psiz$ and find that
\begin{equation*}
\label{est:tauHzpsizder}
\frac{d}{d\di}\tau_\di\left((1-\Delta)^{-\frac{1}{2}}\frac{d}{d\di}\left((\Hz-E(\infty))\psiz\right)\right) = O\left(\frac{1}{\di^5}\right).
\end{equation*}
This proves \eqref{est:Hxpsixder2}.

Using that $\psix =\tau_\di\psiz$ and the $\di$-independence of $\psiz$, we can apply Lemma \ref{lem:Mrder} for $\Phi = \psiz$ and find that 
\begin{equation*}\label{est:psixder}
\frac{d^\alpha}{d\di^\alpha}\psix =\tau_\di\left((\vec{v}\cdot\nabla)^\alpha\psiz\right) = O\left(1\right) \qquad\text{for }\alpha\in\{0,1,2\}.
\end{equation*}
In fact this can be done without Lemma \ref{lem:Mrder} using that the momentum operator is the generator of translations.
In a similar manner we find that
\begin{equation*}
\frac{d^\alpha}{d\di^\alpha}T_\pi\psix = O\left(1\right), \text{ } \alpha\in\{0,1,2\} \qquad 
\frac{d^\alpha}{d\di^\alpha}(\Hx-E(\infty))T_\pi\psix = O\left(\frac{1}{r^4}\right),  \text{ } \alpha\in\{0,1\}.
\end{equation*}
\begin{equation*}
\frac{d}{d\di}(1-\Delta)^{-\frac{1}{2}}\frac{d}{d\di}(\Hx-E(\infty)) T_\pi \psix = O\left(\frac{1}{\di^4}\right).
\end{equation*}
Thus, from the definition of $\cQ$, 
we obtain 
\begin{equation}\label{est:Qpsix}
\frac{d^\alpha}{d\di^\alpha}\cQ\psix = O(1), \quad \alpha\in\{0,1,2\} 
\end{equation}
and 
\begin{equation}
\label{est:QHxpsixderalpha}
\frac{d^\alpha}{d\di^\alpha}\cQ(\Hx-E(\infty))\psix = O\left(\frac{1}{\di^4}\right), \quad\text{for }\alpha\in\{0,1\},
\end{equation}
as well as
\begin{equation}
\label{est:QHxpsixder2}
\frac{d}{d\di}(1-\Delta)^{-\frac{1}{2}}\frac{d}{d\di}\cQ(\Hx-E(\infty))\psix = O\left(\frac{1}{\di^4}\right).
\end{equation}
We now continue with estimating $M$, defined in \eqref{def:M}. We note that
\begin{equation}
\label{est:M}
M= O\left(\frac{1}{\di^4}\right),
\end{equation}
 because  
\begin{equation}\label{est:HxEpsix}
\|(\Hx-E(\infty))\psix\| = \|(\Hz-E(\infty))\psiz\| \overset{\eqref{est:HzE0}}{=} O\left(\frac{1}{\di^4}\right),
\end{equation}
and
\begin{equation}
\label{eq:normQpsi}
\frac{1}{\|\cQ\psix\|^2} = \binom{N_1+N_2}{N_1}.
\end{equation}
Using \eqref{est:Hxpsixder} and \eqref{est:Qpsix} we find that
\begin{equation}\label{eq:Mder}
\frac{\partial}{\partial\di}M =  O\left(\frac{1}{\di^4}\right) + \frac{1}{\|Q \psi\|^2} \langle(\Hx-E(\infty))\psix,\left[\frac{\partial}{\partial\di}(\Hxabot-E)^{-1}\right]\cQ\psix\rangle.
\end{equation}
Writing the difference quotients for the partial derivative of the resolvent and using the second resolvent formula it follows that
\begin{equation}\label{eq:Resolder}
\left[\frac{\partial}{\partial\di}\left(\Hxabot-E\right)^{-1}\right]\cQ = \left(\Hxabot-E\right)^{-1}\left(\frac{d }{d\di}(\Pxbot\Hx\Pxbot)\right)\left(\Hxabot-E\right)^{-1}\cQ,
\end{equation}
where we have also used that the orthogonal projection $\cQ$ commutes with $\Hxabot$.
We now observe that
\begin{multline}\label{eq:normHxbotder}
\left\|(1-\Delta)^{-\frac{1}{2}}\left(\frac{d}{d\di}(\Pxbot\Hx\Pxbot)\right)(1-\Delta)^{-\frac{1}{2}}\right\|_{B(L^2)}
\\ = \left\|(1-\Delta)^{-\frac{1}{2}} \left( -\frac{d\Px}{d\di}\Hx\Pxbot +\Pxbot\frac{d\Hx}{d\di}\Pxbot -\Pxbot\Hx\frac{d\Px}{d\di} \right) (1-\Delta)^{-\frac{1}{2}}\right\|_{B(L^2)}.
\end{multline}
Moreover using \eqref{Hdec} and the definition of $\Ix$, we find
\begin{equation}\label{eq:Hxder}
\frac{d\Hx}{d\di} =  -\frac{N_1 N_2}{\di^2} + B_{\di},
\end{equation}
where
\begin{equation*}
B_{\di} = \sum_{j=1}^{N_1+N_2}\frac{N_2(x_j-\di e_1)\cdot e_1}{|x_j-\di e_1|^3},
%\\
%C_{\di} &= \sum_{i=1}^{N_1} \frac{N_2(x_i -\di e_1)\cdot e_1}{|x_i -\di e_1|^3},
\end{equation*}
and therefore due to Hardy's inequality
\begin{equation}\label{est:Hxder}
\left\|(1-\Delta)^{-\frac{1}{2}}\frac{d \Hx}{d\di}(1-\Delta)^{-\frac{1}{2}}\right\|_{B(L^2)} = O(1).
\end{equation}
Note that arguing as in the proof of \eqref{est:Qpsix} we obtain that
\begin{equation}
\label{est:Pxder}
\left\|\frac{d\Px}{d\di}(1-\Delta)^{\frac{1}{2}}\right\|_{B(L^2)} = O(1).
\end{equation}
Due to the $-\Delta$ form boundedness of $\Hx$, uniformly in $\di$, it follows from \eqref{eq:normHxbotder}, \eqref{est:Hxder} and \eqref{est:Pxder} that
\begin{equation*}\label{est:Hxbotder}
\left\|(1-\Delta)^{-\frac{1}{2}}\left( \frac{d}{d\di}(\Pxbot\Hx\Pxbot) \right)(1-\Delta)^{-\frac{1}{2}}\right\|_{B(L^2)} = O(1).
\end{equation*}
Since moreover $-\Delta$ is $\Hx^\bot$ form bounded, uniformly for large $\di$, we find that
\begin{equation*}\label{est:Hxabot}
\left\|(1-\Delta)^{\frac{1}{2}}\cQ\left(\Hxabot-E\right)^{-\frac{1}{2}}\right\|_{L^2} = O(1).
\end{equation*}
From \eqref{eq:Resolder},  we find that
\begin{equation}\label{est:Resolder}
\left[\frac{\partial}{\partial \di}\left(\Hxabot-E\right)^{-1}\right]\cQ = O(1).
\end{equation}
Using \eqref{est:HxEpsix}, \eqref{eq:normQpsi}, \eqref{eq:Mder} and \eqref{est:Resolder} we find
\begin{equation}\label{est:Mder}
\frac{\partial M}{\partial \di} = O\left(\frac{1}{\di^4}\right).
\end{equation}
If we try to differentiate $\frac{\partial M}{\partial \di}$ with respect to $\di$ we run into the problem that $\frac{d^2\Hx}{d\di^2}$ is not in $L^1_{loc}$. To remedy this we write the difference quotient for $\frac{d\Hx}{d\di}$ and perform changes of variables so that we do not have to differentiate $\frac{d\Hx}{d\di}$. To this end using \eqref{est:Hxpsixder}, \eqref{est:Hxpsixder2} and \eqref{est:Qpsix} we can argue similarly as in the proof of \eqref{est:Mder} to find
\begin{multline*}
\label{est:partialMder}
\frac{\partial^2 M}{\partial\di^2} = O\left(\frac{1}{\di^4}\right)\\ + \frac{\partial}{\partial\di} \left\langle(\Hx-E(\infty))\psix, (\Hxabot-E)^{-1}\Pxbot\frac{d\Hx}{d\di}\Pxbot(\Hxabot-E)^{-1}\cQ\psix \right\rangle.
\end{multline*}
This can be rewritten as
\begin{equation}
\label{est:MPhiPsi}
\frac{\partial^2 M}{\partial\di^2} =  O\left(\frac{1}{\di^4}\right) + \frac{\partial}{\partial\di}\left\langle\Phi,\frac{d\Hx}{d\di}\Psi\right\rangle,
\end{equation}
where
\begin{equation*}
\Phi = \Pxbot\left(\Hxabot-E\right)^{-1}\cQ\left(\Hx -E(\infty)\right)\psix\quad\text{and}\quad \Psi = \Pxbot\left(\Hxabot-E\right)^{-1}\cQ\psix
\end{equation*}
both belong to $H^2$. Note that we could add $\cQ$ in the definition of $\Phi$ because $\cQ$ is an orthogonal projection commuting with $\Pxbot,\Hxabot,\Hx,\frac{d\Hx}{d\di}$.
Using \eqref{est:Qpsix}, \eqref{est:QHxpsixderalpha}, \eqref{est:Pxder} and \eqref{est:Resolder} we find that
$$\frac{\partial^\alpha}{\partial\di^\alpha}\Phi = O\left(\frac{1}{\di^4}\right),\qquad 
\frac{\partial^\alpha}{\partial\di^\alpha}\Psi = O\left(\frac{1}{\di^4}\right)$$
for $\alpha\in\{0,1\}$.
We arrive at
\begin{equation}\label{eq:Mder2}
\frac{\partial^2 M}{\partial\di^2} = O\left(\frac{1}{\di^4}\right) + \lim_{h \rightarrow 0}  \left\langle\Phi,\frac{1}{h}\left(\left.\frac{d \Hx}{d\di}\right|_{\di+h}-\left.\frac{d\Hx}{d\di}\right|_{\di}\right)\Psi\right\rangle.
\end{equation}
Observe that by \eqref{eq:Hxder} it is enough to prove that
\begin{equation}\label{est:limBr}
\lim_{h \rightarrow 0} \left\langle\Phi, \frac{B_{\di+h}-B_{\di}}{h} \Psi\right\rangle = O\left(\frac{1}{\di^4}\right).
\end{equation}
Indeed a simple change of variables gives
\begin{equation*}
\left\langle\Phi, B_{\di+h} \Psi\right\rangle = \left\langle\gamma_{h}\Phi, B_{\di} \gamma_{h}\Psi\right\rangle,
\end{equation*}
where $\gamma_{h}$ is defined by
\begin{equation*}\label{def:delta}
	\gamma_{h}\psix(x_1,\dots,x_{N_1+N_2}) = \psix(x_1+he_1,\dots,x_{N_1+N_2}+he_1).
\end{equation*}
Thus
\begin{align}
\lim_{h \rightarrow 0} \left\langle\Phi, \frac{B_{\di+h}-B_{\di}}{h} \Psi\right\rangle &= \lim_{h \rightarrow 0}\left\langle\frac{\gamma_{h}\Phi-\Phi}{h}, B_{\di} \gamma_{h}\Psi\right\rangle + \lim_{h \rightarrow 0}\left\langle\Phi, B_{\di} \frac{\gamma_{h}\Psi-\Psi}{h}\right\rangle \nonumber
\\ &= \left\langle(-\vec{w}\cdot\nabla)\Phi, B_{\di} \Psi\right\rangle + \left\langle\Phi, B_{\di} (-\vec{w}\cdot\nabla)\Psi\right\rangle \label{eq:limBr}
\end{align}
where 
\begin{equation*}
\vec{w}= \underbrace{(-1,0,0,-1,0,0,\dots,-1,0,0)}_{N_1+N_2 \text{ times}},
\end{equation*}
and we used that the momentum operator is the generator of translations. Since $\|\Psi\|_{H^2}=O(1)$ and $\|\Phi\|_{H^2} = O\left(\frac{1}{\di^4}\right)$ we can apply Hardy's inequality to the right hand side of \eqref{eq:limBr} to arrive at \eqref{est:limBr}. From \eqref{eq:Mder2}, \eqref{eq:Hxder} and \eqref{est:limBr} we obtain
\begin{equation*}\label{est:Mder2}
\frac{\partial^2 M}{\partial\di^2} = O\left(\frac{1}{\di^4}\right).
\end{equation*}
Together with  \eqref{est:M} and \eqref{est:Mder} we arrive at \eqref{est:Mderalpha}.
Using \eqref{def:B} and \eqref{est:Mderalpha} we find that
\begin{equation}
\label{est:Bderalpha}
\frac{\partial^\alpha B}{\partial\di^\alpha} = O\left(\frac{1}{\di^{10}}\right) \qquad\text{for }\alpha\in\{0,1,2\}.
\end{equation}

To estimate the term $C$ defined in \eqref{def:C} we will introduce a new cut-off function.
Let
\begin{equation}\label{def:Chix}
\Chix(x_1,\dots,x_{N_1+N_2}) = \prod_{i=1}^{N_1}\chi_1\left(\left(\frac{6}{7}\right)^2\frac{x_i}{\di_0}\right)\prod_{j=N_1+1}^{N_1+N_2}\chi_1\left(\left(\frac{6}{7}\right)^2\left(\frac{x_j}{\di_0}-\frac{\di}{\di_0}e_1\right)\right).
\end{equation}
This dilation ensures that
\begin{equation}\label{eq:suppChix}
\left\lbrace
	\begin{aligned}
	&\Chix = 1 \text{ on } \supp(\psix),\\
	&\Chix T_\pi\psix = 0 \text{ for all } \pi \in S_{N_1+N_2}\backslash \subi \times \subj.
	\end{aligned}
\right.	
\end{equation}
From \eqref{def:C}, \eqref{eq:normQpsi} and \eqref{eq:suppChix} we have
\begin{equation*}
C = \binom{N_1+N_2}{N_1} \langle\ (\Hx-E(\infty))\psix, \Chix \left(\Hxabot-E\right)^{-1}(\Hx-E(\infty))\cQ\psix\rangle.
\end{equation*}
Since
\begin{equation*}
\Chix\left(\Hxabot-E\right)^{-1} = \Rx\Chix + \left[\Chix \left(\Hxabot-E\right)^{-1} - \Rx\Chix\right],
\end{equation*}
with $\Rx$ defined in \eqref{def:Rr} and by \eqref{eq:suppChix} and the locality of $H$
\begin{equation*}
\Chix(\Hx-E(\infty))\cQ\psix = \frac{1}{\binom{N_1+N_2}{N_1}}(\Hx-E(\infty))\psix,
\end{equation*}
We find
\begin{equation}\label{eq:C1C2}
C = C_1 +C_2,
\end{equation}
where
\begin{equation*}\label{def:C1}
C_1 = \left\langle (\Hx-E(\infty))\psix, \Rx(\Hx-E(\infty))\psix \right\rangle
\end{equation*}
and
\begin{equation*}\label{def:C2}
C_2 = \left\langle (\Hx-E(\infty))\psix, \left[\Chix \left(\Hxabot-E\right)^{-1} - \Rx\Chix\right] (\Hx-E(\infty))\cQ\psix \right\rangle.
\end{equation*}
By the same change of variables as in section \ref{sec:lin} we have 
\begin{equation*}
C_1 = \left\langle (\Hz-E(\infty))\psiz, \Rz(\Hz-E(\infty))\psiz \right\rangle,
\end{equation*}
and therefore by \eqref{est:HzE0} and the $\di$-independence of $\Rz$, when $\di$ varies a little bit, we find that
\begin{equation}
\label{est:C1}
C_1 = \Odi\left(\frac{1}{\di^8}\right).
\end{equation}
To estimate $C_2$ observe that \eqref{def:Rr} and \eqref{def:Px0} imply that
\begin{align}
&\left[\Chix \left(\Hxabot-E\right)^{-1} - \Rx\Chix\right]\cQ\nn\\
&\  = \Rx \left[(\Hxabot_0-E(\infty))\Chix - \Chix(\Hxabot-E)  \right]\cQ\left(\Hxabot-E\right)^{-1}\nn\\
&\   = \Rx\left[\Pxbot_0\Hx_0\Pxbot_0\Chix - \Chix\Pxbot\Hx\Pxbot  + (E-E(\infty))\Chix\right]\cQ\left(\Hxabot-E\right)^{-1},\label{eq:diffResolR}
\end{align}
where in the last step we used that the projection $\cQ$ commutes with $\Pxbot,\Hx^\bot$, and that $\cQ_1,\cQ_2$ commute with $\Pxbot_0,\Hx_0$ and the equality $\cQ_1\cQ_2\cQ = \cQ$ holds, to omit the $\cQ_1,\cQ_2,\cQ$ that appear in the definitions of $\Hx_0^{a,\bot},\Hxabot$, respectively.
We now claim that
\begin{equation}\label{eq:XPbotPbotpsi}
\Chix\Pxbot\Hx\Pxbot\cQ = \Pxbot_\psix\Chix\Hx\Pxbot_\psix\cQ,
\end{equation}
where $\Pxbot_\psix = 1 - |\psix\rangle\langle\psix|$. Indeed from \eqref{eq:suppChix} it follows that $\Chix\cQ\psix = \Chix\psix\|\cQ\psix\|^2$, which together with \eqref{eq:projP} and the fact that the projection $\cQ$ commutes with $\Hx$ and $\Pxbot$ gives
\begin{equation*}\label{preeq:XPbotPbotpsi}
\Chix\Pxbot\Hx\Pxbot\cQ = \Chix\Pxbot_\psix\Hx\Pxbot\cQ = \Pxbot_\psix\Chix\Hx\Pxbot\cQ
\end{equation*}
where in the last step we used that $\Chix\psix = \psix$, see \eqref{eq:suppChix}. Due to \eqref{eq:suppChix} and the locality of $H$ we have that $\Chix\Hx\cQ\psix = \Chix\Hx\psix\|\cQ\psix\|^2$ and thus we can repeat the argument  to arrive at \eqref{eq:XPbotPbotpsi}.

Due to \eqref{eq:suppChix} $\Chix$ commutes with $\Px_0$ and thus
\begin{equation*}
\label{eq:XP0comm}
\Pxbot_0\Hx_0\Pxbot_0\Chix = \Pxbot_0\Hx_0\Chix\Pxbot_0.
\end{equation*}
Together with  \eqref{eq:diffResolR} and \eqref{eq:XPbotPbotpsi} we find
\begin{multline*}
\left[\Chix \left(\Hxabot-E\right)^{-1} - \Rx\Chix\right]\cQ  
\\ = \Rx\left[\Pxbot_0\Hx_0\Chix\Pxbot_0 - \Pxbot_\psix\Chix\Hx\Pxbot_\psix + (E-E(\infty))\Chix\right]\left(\Hxabot-E\right)^{-1} \cQ.
%\\ &= \Rx\left[(\Pxbot_\psix-\Pxbot_0)\Chix\Hx\Pxbot_\psix +\Pxbot_0(\Chix\Hx -\Hx_0\Chix)\Pxbot_\psix + \Pxbot_0\Hx_0\Chix(\Pxbot_\psix-\Pxbot_0) - (E-E(\infty))\Chix\right]\left(\Hxabot-E\right)^{-1} \cQ
%\\ &= \Rx\left[(\Pxbot_\psix-\Pxbot_0)\Chix\Hx\Pxbot_\psix +\Pxbot_0(\Ix\Chix + [\Delta,\Chix])\Pxbot_\psix + \Pxbot_0\Hx_0\Chix(\Pxbot_\psix-\Pxbot_0) - (E-E(\infty))\Chix\right]\left(\Hxabot-E\right)^{-1} \cQ.
\end{multline*}
Observe that
\begin{multline}\label{eq:Px0Pxpsix}
\Pxbot_0\Hx_0\Chix\Pxbot_0 - \Pxbot_\psix\Chix\Hx\Pxbot_\psix =(\Px_\psix-\Px_0)\Hx_0\Chix\Pxbot_0\\ + \Pxbot_\psix (\Hx_0\Chix - \Chix\Hx)\Pxbot_0 +\Pxbot_\psix\Chix\Hx(\Px_\psix -\Px_0).
\end{multline}
Thus, we obtain
\begin{equation*}
\label{def:C21C22C23C24}
C_2 = C_{21} + C_{22} + C_{23} + C_{24},
\end{equation*}
where
\begin{align*}
	C_{21} &= \langle(\Hx-E(\infty))\psix, \Rx (E-E(\infty))\Chix (\Hxabot-E)^{-1} (\Hx-E(\infty))\cQ\psix\rangle,
\\  C_{22} &= \langle(\Hx-E(\infty))\psix, \Rx \left((\Px_\psix-\Px_0)\Hx_0\Chix\Pxbot_0\right) \cQ(\Hxabot-E)^{-1} (\Hx-E(\infty))\cQ\psix\rangle,
\\	C_{23} &= \langle(\Hx-E(\infty))\psix, \Rx \left(\Pxbot_\psix\Chix\Hx(\Px_\psix -\Px_0)\right)\cQ (\Hxabot-E)^{-1} (\Hx-E(\infty))\cQ\psix\rangle,
\\	C_{24} &= \langle(\Hx-E(\infty))\psix, \Rx \left(\Pxbot_\psix (\Hx_0\Chix - \Chix\Hx)\Pxbot_0\right) (\Hxabot-E)^{-1} (\Hx-E(\infty))\cQ\psix\rangle.
\end{align*}
We first estimate $C_{21}$.
By \eqref{eq:E-Er} and \eqref{eq:old} we have
\begin{equation*}\label{est:E-Einf}
\frac{\partial^\alpha}{\partial r^\alpha}(E - E(\infty)) = O\left(\frac{1}{\di^6}\right)
\end{equation*}
for $\alpha\in\{0,1,2\}$.
Furthermore one can verify that $\partial_r^\alpha\Chix = O(r^{-\alpha})$ for $\alpha\in\{0,1,2\}$.
Using \eqref{est:QHxpsixderalpha} and \eqref{est:QHxpsixder2} we can argue similarly as in the proof of \eqref{est:Mderalpha} to arrive at
\begin{equation}
\label{est:C21deralpha}
\frac{\partial^\alpha}{\partial\di^\alpha}C_{21} = O\left(\frac{1}{\di^{14}}\right) \qquad\text{for }\alpha\in\{0,1,2\}.
\end{equation}
Note that differentiating the term having $\frac{d\Rx}{d\di}$ can be similarly handled as differentiating the term containing $\frac{\partial}{\partial \di}(\Hxabot-E)^{-1}$ in the proof of \eqref{est:Mderalpha}.

We next estimate $C_{24}$. Observe that
\begin{equation*}
\Hx_0\Chix - \Chix\Hx = -\Ix\Chix - \left[\Delta,\Chix\right] = -\Ix\Chix -(\Delta\Chix) -2\nabla\Chix\cdot\nabla,
\end{equation*} 
and thus by the previous estimate on $\partial_r^\alpha\Chix$ and the fact that $X$ is supported $O(\di)$ far from the singularities of $I$ we have 
\begin{equation*}
\frac{d^\alpha}{d\di^\alpha}\left(\Hx_0\Chix - \Chix\Hx\right) = O\left(\frac{1}{\di^{\alpha+1}}\right) +O\left(\frac{1}{\di^{\alpha+1}}\right) \cdot\nabla
\end{equation*}
for $\alpha\in\{0,1,2\}$.
Using the $(\Hxabot-E)$ boundedness of $ Q \nabla$ we can argue again as in the proof of \eqref{est:C21deralpha} to conclude that
\begin{equation}
\label{est:C24deralpha}
\frac{\partial^\alpha C_{24}}{\partial\di^\alpha} = O\left(\frac{1}{\di^9}\right) \qquad \text{for }\alpha\in\{0,1,2\}.
\end{equation}

We now estimate $C_{22}$. Observe that
\begin{equation}\label{eq:diffPx0Pxpsix}
\Px_\psix-\Px_0 = |\fr\rangle\langle\psix| + |\ppy\rangle\langle \fr|,
\end{equation}
where, by~\eqref{def:psi0}, 
\begin{align*}
\fr :=\psix-\ppy &= \left(\frac{1}{\|\psiz_0\|^2}-1\right)\ppy + \frac{\chix\Rx\Ix\ppy}{\|\psiz_0\|^2}
\\ &= \tau_\di\left[\left(\frac{1}{\|\psiz_0\|^2}-1\right)\pp + \frac{\chiz\Rz\Iz\pp}{\|\psiz_0\|^2}\right].
\end{align*}
It follows from Lemma \ref{lem:Mrder} together with \eqref{Ifnocutoff} and \eqref{est:psi0norm} that
\begin{equation*}
\frac{d^\alpha}{d\di^\alpha}\fr = O\left(\frac{1}{\di^3}\right)
\end{equation*}
for $\alpha\in\{0,1,2\}$ and thus by \eqref{eq:diffPx0Pxpsix}
\begin{equation}
\label{est:Px0Pxpsixderalpha}
\frac{d^\alpha}{d\di^\alpha}(\Px_\psix-\Px_0) = O\left(\frac{1}{\di^3}\right), \quad \alpha\in\{0,1,2\}. 
\end{equation}
Since $\Chix=1$ on $\supp\psix_0$ and $\supp\ppy$ we have
\begin{equation*}
C_{22} = \langle(\Hx-E(\infty))\psix, \Rx \left((\Px_\psix-\Px_0)\Hx_0\Pxbot_0\right) \cQ(\Hxabot-E)^{-1} (\Hx-E(\infty))\cQ\psix\rangle.
\end{equation*}
Thus using \eqref{eq:diffPx0Pxpsix}
\begin{equation*}
\label{def:C221C222}
C_{22} = C_{221} + C_{222},
\end{equation*}
where
\begin{equation*}\label{def:C221}
C_{221} = \langle(\Hx-E(\infty))\psix, \Rx \fr\rangle \langle \psix,\Hx_0\Pxbot_0\cQ (\Hxabot-E)^{-1} (\Hx-E(\infty))\cQ\psix\rangle
\end{equation*}
and
\begin{equation*}\label{def:C222}
C_{222} = \langle(\Hx-E(\infty))\psix, \Rx\ppy\rangle \langle\fr,\Hx_0\Pxbot_0\cQ (\Hxabot-E)^{-1} (\Hx-E(\infty))\cQ\psix\rangle.
\end{equation*}
We now estimate $C_{221}$. Using the change of variables of Section \ref{sec:lin} we find
\begin{equation*}
C_{221} = \langle(\Hz-E(\infty))\psiz, \Rz \frz\rangle \langle \psix,\Hx_0\Pxbot_0\cQ (\Hxabot-E)^{-1} (\Hx-E(\infty))\cQ\psix\rangle.
\end{equation*}
But by~\eqref{Ifnocutoff}
\begin{equation*}
\frz = \left(\frac{1}{\|\psiz_0\|^2}-1\right)\pp + \frac{\chiz\Rz\Iz\pp}{\|\psiz_0\|^2} = \Od^\infty\left(\frac{1}{\di^3}\right)
\end{equation*}
and it follows using \eqref{est:HzE0} that
\begin{equation*}
C_{221} = \Odi\left(\frac{1}{\di^{7}}\right)\langle \psix,\Hx_0\Pxbot_0\cQ (\Hxabot-E)^{-1} (\Hx-E(\infty))\cQ\psix\rangle.
\end{equation*}
Because of the boundedness of $\Hx_0\Pxbot_0\cQ (\Hxabot-E)^{-1}$ we can differentiate $\psix$ twice.
Note that in $\frac{\partial}{\partial\di}C_{221}$ the term containing $\frac{d\Hx_0}{d\di}$ appears. To differentiate this term we argue as in the proof of \eqref{est:Mder2}.
%Note moreover that
%\begin{equation}
%\label{est:QHxResol}
%\cQ\Hx(\Hxabot-E)^{-1} = O(1).
%\end{equation}
%Thus due to \eqref{est:Px0Pxpsixderalpha} and \eqref{est:QHxResol}
With these additional observations we can argue as in the proof of \eqref{est:C21deralpha} to find that $\partial_r^\alpha C_{221}=O(r^{-11})$ for all $\alpha\in\{0,1,2\}$. Similarly we can show that $\partial_r^\alpha C_{222}=O(r^{-11})$
for all $\alpha\in\{0,1,2\}$. Thus we arrive at $\partial_r^\alpha C_{22}=O(r^{-11})$ and $\partial_r^\alpha C_{23}=O(r^{-11})$ and obtain $\partial_r^\alpha C_{2}=O(r^{-9})$, for all $\alpha\in\{0,1,2\}$.
Together with \eqref{eq:C1C2} and \eqref{est:C1} we conclude that 
\begin{equation}
\label{est:Cderalpha}
C = O\left(\frac{1}{\di^8}\right),\qquad \frac{\partial^\alpha}{\partial\di^\alpha}C = O\left(\frac{1}{\di^9}\right)
\end{equation}
for $\alpha\in\{1,2\}$ and hence arrive at \eqref{est:Aalpha}.

\subsection{Conclusion of the proof of Theorem \ref{thm:new}}
We now apply the implicit function theorem in \eqref{eq:ImprE} to estimate the derivatives of $E(r)$, which coincide with that of $W(r)$. From \eqref{eq:ImprE}, \eqref{eq:psiHpsi}, \eqref{est:Aalpha}, we find
\begin{equation*}\label{impl1}
\frac{\partial \Imp(\di,E)}{\partial r}=-\frac{6 \sigma}{\di^7} + O_d^1 \left(\frac{1}{\di^8}\right)
\end{equation*}
for $E$ as in \eqref{eq:E-Er}.
Since $\frac{\partial}{\partial E} \left(\Hx^{a,\bot}-E\right)^{-1}=\left(\Hx^{a,\bot}-E\right)^{-2}$, using \eqref{eq:ImprE}, \eqref{eq:nonlin}, \eqref{est:HxEpsix} we obtain that
\begin{equation*}\label{impl2}
\frac{\partial \Imp(\di,E)}{\partial E}=1 + O\left(\frac{1}{\di^8}\right) \neq 0.
\end{equation*}
for $E$ close to $E(r)$. Thus $\Imp$ has continuous partial derivatives of first order and is continuously differentiable in a neighborhood of the curve $(r,E(r)), r\in (r_0-\delta,r_0+\delta)$ for some $\delta>0$. Since  \eqref{eq:Erexp} also holds we can apply the implicit function theorem to conclude that if $r$ is large enough then $E(r)$ is differentiable and 
\begin{equation*}\label{WrAb}
W'(r)=E'(r)=-\frac{\partial_r \Imp(\di,E(r))}{\partial_E \Imp(\di,E(r))}.
\end{equation*}
We can conclude the proof of the estimate on $W'(r)$ in Theorem \ref{thm:new}. In fact with the argument providing Equation \eqref{impl1} it follows that 
\begin{equation*}\label{impl3}
\frac{\partial^2 \Imp(\di,E(\di))}{\partial \di \partial E}= O_d^1\left(\frac{1}{\di^8}\right).
\end{equation*}
and since $\frac{\partial}{\partial E} \left(\Hx^{a,\bot}-E\right)^{-2}=2 \left(\Hx^{a,\bot}-E\right)^{-3}$ we  obtain that
\begin{equation*}\label{impl4}
\frac{\partial^2 \Imp(\di,E(\di))}{\partial E^2}= O_d^1\left(\frac{1}{\di^8}\right).
\end{equation*}
Hence we can differentiate the right hand side of 
\eqref{WrAb} and we obtain the estimate on $W''(r)$ stated in Theorem \ref{thm:new}.\qed

\section{Proof  of Theorem \ref{thm:new2}}

In this section we provide a proof of Theorem \ref{thm:new2}. To handle the spin we need to introduce further notations. 
Let $G_k$ be the ground state eigenspace of $\Hx_k^a$ and define the approximate cut-off ground state eigenspace 
\begin{equation*}
G_{k,r}:=\left\{\prod_{i=1}^{N_j}\chi_1\left(\frac{x_i}{\di_0}\right) \phi: \phi \in G_k\right\},
\end{equation*}
where $\di_0$ is the same as in \eqref{def:phi}. Let $\{\varphi_n:n=1,\dots,\dim(G_1)\}$ be an orthonormal basis of $G_{1,r}$ and 
$\{\widetilde{\psi_m}:m=1,\dots,\dim(G_2)\}$ be an orthonormal basis of $G_{2,r}$.  
We further define
\begin{equation}
\label{def:Psinm}
\Psinm{}{} := \phipsi{}{} - \chix \Rx \Ix \phipsi{}{} = K \phipsi{}{},
\end{equation}
where $K = (1- \chix\Rx\Ix)$ and 
\begin{equation*}
\psi_m(x_1,\dots,x_{N_2})=\widetilde{\psi_m}(x_1-re_1,\dots,x_{N_2}-re_1) =: (\tau_r\widetilde{\psi_m})(x_1,\dots,x_{N_2}),
\end{equation*}
with $R$ is defined as in \eqref{def:Rr} with the difference that $\Px_0$ is the orthogonal projection onto $G_{1,\di}\otimes \tau_r G_{2,\di}$.
Finally, we define $\Pi$ as the orthogonal projection onto $$\text{span}\{\cQ\Psinm{}{}|\quad n=1,\dots,\dim(G_1),\quad m=1,\dots,\dim(G_2)\}.$$

We assume without loss of generality that the second atom has an irreducible ground state eigenspace. We will prove the following  lemma, which will help us to adapt the arguments of Section \ref{sec:proof} in the present setting. To this end we define the operator $\Pi_2$ to be the orthogonal projection onto $ \tau_r G_{2,\di} $. Recall that $Y$ is defined in \eqref{def:YN1N2}.
\begin{lemma}\label{lem:PiHPi}
	Let $A$ be a self-adjoint operator on $Y$ acting only on the position variables and whose domain contains $ G_{1,r} \otimes \tau_r G_{2,r}$. Then the operator $S: \tau_r G_{2,\di} \to \tau_r G_{2,\di} $ defined through the sesquilinear form
	$$\langle \psi_{m_1}, S \psi_{m_2} \rangle= \langle  \Phi_1 \otimes \psi_{m_1}, A  \Phi_2 \otimes \psi_{m_2} \rangle, $$
	with $\Phi_1,\Phi_2 \in G_{1,\di}$, is a multiple of the identity. The same statement holds if  $\tau_r G_{2,r}$ is replaced by $G_{2,r}$. 
\end{lemma}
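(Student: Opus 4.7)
The plan is to apply Schur's lemma, using the hypothesis that $A$ acts only on position variables together with the irreducibility of the second atom's ground-state eigenspace. Let $\mathcal{G}$ denote the finite group of unitaries on $Y$ generated by the spin permutations $\widetilde{T}_\pi$ with $\pi\in S_{N_1+N_2}$ that fix $\{1,\ldots,N_1\}$ pointwise, together with the shift that increments only the spin variables $s_{N_1+1},\ldots,s_{N_1+N_2}$ by one modulo $q$. Through the natural identification of $\mathcal{G}$ with the group generated by $\mathcal{F}$ and $\widetilde{T}_\pi$, $\pi\in S_{N_2}$, on the second-atom Hilbert space, the irreducibility hypothesis on $G_2$ says exactly that $\mathcal{G}$ acts irreducibly on $G_2$. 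Because the cut-off $\prod_i\chi_1(x_i/r_0)$ and the translation $\tau_r$ are position-only, they intertwine the $\mathcal{G}$-action, so $G_{2,r}$ and $\tau_r G_{2,r}$ inherit irreducibility.

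The key step is to check that $S$ is a $\mathcal{G}$-intertwiner on $\tau_r G_{2,r}$. Each $V\in\mathcal{G}$ is spin-only, hence it commutes with $A$ (which is position-only) and has the tensor-product form $V(\Phi\otimes\psi)=\Phi\otimes V\psi$ for $\Phi\in G_{1,r}$ and $\psi\in\tau_r G_{2,r}$, where on the right $V$ denotes the natural restriction of the spin operator to the second-atom Hilbert space. Substituting into the defining form of $S$ gives
$$\langle V\psi_{m_1},SV\psi_{m_2}\rangle=\langle V(\Phi_1\otimes\psi_{m_1}),AV(\Phi_2\otimes\psi_{m_2})\rangle=\langle\Phi_1\otimes\psi_{m_1},A(\Phi_2\otimes\psi_{m_2})\rangle=\langle\psi_{m_1},S\psi_{m_2}\rangle,$$
so $V^{*}SV=S$ on $\tau_rG_{2,r}$ for every generator $V$, hence for every $V\in\mathcal{G}$. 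Schur's lemma, applied to the irreducible finite-dimensional complex $\mathcal{G}$-representation $\tau_rG_{2,r}$, then yields $S=\lambda I$ for some $\lambda\in\mathbb{C}$, which is the first statement. The same argument with $\tau_r$ replaced by the identity proves the second.

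Since $A$ is genuinely spin-free, the proof is essentially bookkeeping and no serious obstacle is expected. The only points worth some care are to isolate the spin operators acting on the second atom's variables (rather than all $N_1+N_2$ electrons) and to verify that the cut-off map $\phi\mapsto\prod_i\chi_1(x_i/r_0)\phi$ is injective on $G_2$, so that $G_{2,r}$ is truly $\mathcal{G}$-isomorphic to $G_2$ and inherits irreducibility; the latter follows from the real-analyticity of atomic eigenfunctions away from the Coulomb singularities.
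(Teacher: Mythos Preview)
Your proof is correct and follows essentially the same approach as the paper: both show that $S$ commutes with the spin operators $\widetilde{T}_\pi$ ($\pi\in S_{N_2}$) and $\mathcal{F}$ because $A$ is position-only, and then use irreducibility of $\tau_r G_{2,r}$ to conclude that $S$ is a scalar. The only cosmetic difference is that you invoke Schur's lemma directly, whereas the paper decomposes $S=B+iC$ into self-adjoint parts and proves Schur's lemma by hand via the eigenspace argument for each of $B$ and $C$.
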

\begin{proof}
	The operator $S$ is explicitly given by
	 $S \psi=\langle  \Phi_{1}, \Pi_2 A \Pi_2 \Phi_{2} \otimes \psi \rangle_1 $,
	 where $\langle\cdot \rangle_1$ indicates that the integration for the inner product is taken only with respect to the coordinates of the first atom. One can  verify that $S^* \psi= \langle  \Phi_{2}, \Pi_2 A \Pi_2 \Phi_{1} \otimes \psi \rangle_1$. 
	 We write now $S$ as a linear combination of two self-adjoint operators on $\tau_r G_{2,r}$ namely  
	 \begin{equation*}
		S = \frac{S + S^*}{2}-i \frac{iS -i S^*}{2} =: B+Ci.
	 \end{equation*}
	 Since $A$ acts only on the position variables, it follows that
 the two self-adjoint  operators $B,C$ commute with $\widetilde{T}_\pi$ for all permutations $\pi \in S_{N_2}$ and with the spin shift $\cF$. Thus, due to the irreducibility of the ground state eigenspace of the second atom, the operators $B$ and $C$  have to be  multiples of the identity on $\tau_r G_{2,r}$. Indeed, since e.g. $B$ is symmetric and bounded on $\tau_r G_{2,r}$, there exists an eigenvector $f\in \tau_r G_{2,r}$ so that
$Bf = \lam f$. Since moreover $B$ commutes with $\Tspin$ and $\cF$, we find that $\ker(B-\lambda)$ is invariant under the group of spin transformations. The irreducibility implies that $\ker(B-\lambda)=G_{2,r}$, that is, $B$ is a multiple of the identity. The same argument for $C$ gives that $S$ is as well a (complex) multiple of the identity. 
\end{proof}

Arguing as  \cite{AS}, see also \cite{MoS}, it follows that there exist $c,C$ such that if  $\di\geq c$ then
\begin{equation}\label{eq:IMScor2}
	\Pi^\bot\Hx^a\Pi^\bot - E(\di) \geq C > 0.
\end{equation}
Thus we can apply as in the spinless case the Feshbach map
\begin{equation*}
	F_\Pi(\lam) = \left.\big(\Pi\Hx\Pi - \Pi\Hx\Pi^\bot(\Hxabot -\lam)^{-1}\Pi^\bot\Hx\Pi\big)\right|_{\Ran \Pi},
\end{equation*}
where $\Hxabot = \Pi^\bot\Hx^{a}\Pi^\bot$ and it follows that $E=E(r)$ is an eigenvalue of $F_\Pi(E)$. Moreover, arguing as in the proof of \eqref{est:HzE0} we can show that
\begin{equation*}
	\left\| \Pi^\bot\Hx\Pi \right\| \leq O\left(\frac{1}{\di^4}\right).
\end{equation*}
It therefore follows that 
\begin{equation*}
	E = \min \sigma\left(\left.\Pi\Hx\Pi\right|_{\Ran \Pi}\right)+ O \left(\frac{1}{\di^8}\right).
\end{equation*}
Let $\Psi\in\Ran \Pi$ be a minimizer of $\langle\Phi,\Pi\Hx\Pi\Phi\rangle$, $\Phi\in\Ran \Pi$, $\|\Phi\|=1$. Then
\begin{equation*}
	E = \langle\Psi,\Hx\Psi\rangle + O\left(\frac{1}{\di^8}\right),
\end{equation*}
and
\begin{equation}\label{eq:defpsipsi0}
	\Psi = \frac{\cQ\Psi_0}{\|\cQ\Psi_0\|},\qquad \text{ where } \Psi_0 = \sum_{m=1}^{\text{dim}(G_2)} K\Phi_m\otimes\psi_{m}
\end{equation}
 for some $\Phi_m\in G_{1,\di}$ such that $\sum_{m=1}^{\dim(G_2)}\|\Phi_m\|^2=1$. So arguing as in \eqref{eq:PHP} we find
\begin{equation*}
	E - \Einfty  = \frac{1}{\|\Psi_0\|^2}\langle\Psi_0,(\Hx - \Einfty)\Psi_0\rangle +O\left(\frac{1}{r^8}\right),
\end{equation*}
which together with Lemma \ref{lem:PiHPi} for $A=K^*(H-E(\infty)) K$ gives that
\begin{equation}\label{eq:interensum}
E - \Einfty  = \frac{1}{\|\Psi_0\|^2} \sum_{m=1}^{\dim(G_2)} \langle \Phi_m\otimes\psi_{m}, K^*\left(\Hx-\Einfty\right)K\Phi_m\otimes\psi_{m}\rangle  + O\left(\frac{1}{r^8}\right).
\end{equation}
We are going to prove that
\begin{multline}\label{eq:prevdW}
\left\langle \Phi_m\otimes\psi_{m}, K^*\left(\Hx-\Einfty\right)K\Phi_m\otimes\psi_{m}\right\rangle \\=  -\frac{\langle f\Phi_m\otimes\widetilde{\psi_{m}}, \Rz f \Phi_m\otimes\widetilde{\psi_{m}}\rangle}{\di^6} +O\left(\frac{1}{\di^7}\right).
\end{multline}
This can be done as in the proof of \eqref{eq:psiHpsi} with the help of \eqref{eq:N1N2N3}, \eqref{eq:PHPN1}, \eqref{eq:PHPN2}, \eqref{eq:PHPN3}. The only things which are not a priori clear are that
\begin{equation}\label{eq:IPhipsi}
\langle \Phi_m\otimes\widetilde{\psi_n}, \Iz \Phi_m\otimes\widetilde{\psi_n}\rangle = 0,
\end{equation}
and that
\begin{equation}\label{eq:P0botIPhipsi}
\Pzbot_{G_{1,r} \otimes G_{2,r}}\Iz\Phi_m\otimes\widetilde{\psi_n} = \Iz\Phi_m\otimes\widetilde{\psi_n}.
\end{equation}
We will now prove \eqref{eq:IPhipsi} and \eqref{eq:P0botIPhipsi} can be similarly proven.
Using Lemma \ref{lem:PiHPi} we find 
\begin{equation}\label{eq:Newt1}
	\langle \Phi_m\otimes \widetilde{\psi_{m}}, \Iz \Phi_m\otimes\widetilde{\psi_{m}}\rangle = \frac{1}{\dim(G_2)} \sum_{n=1}^{\dim(G_2)}\langle \Phi_m\otimes\widetilde{\psi_n}, \Iz \Phi_m\otimes\widetilde{\psi_n}\rangle.
\end{equation}
We will now prove that
\begin{equation}\label{eq:Newt2}  \sum_{n=1}^{\dim(G_2)}\langle \Phi_m\otimes\widetilde{\psi_n}, \Iz \Phi_m\otimes\widetilde{\psi_n}\rangle=0.
\end{equation}
Indeed we have that
\begin{equation}\label{eq:Newt3} 
\sum_{n=1}^{\dim(G_2)}\langle \Phi_m \otimes\widetilde{\psi_n}, \Iz \Phi_m \otimes\widetilde{\psi_n}\rangle = \langle\Phi_m,V\Phi_m\rangle ,
\end{equation}
where 
\begin{align*}
&V\left(x_1,\dots,x_{N_1}\right) \\ 
&= \sum_{n=1}^{\dim(G_2)} \int\Iz\left(x_1,\dots,x_{N_1+N_2}\right)|\widetilde{\psi_n}|^2\left(x_{N_1+1},\dots,x_{N_1+N_2}\right)dx_{N_1+1}\dots 
\\ &= \sum_{n=1}^{\dim(G_2)} \int\sum_{j=1}^{N_2}\Ix_0\left(x_1,\dots,x_{N_1},x_{N_1+j}\right)|\widetilde{\psi_n}|^2\left(x_{N_1+1},\dots,x_{N_1+N_2}\right)dx_{N_1+1}\dots ,
\end{align*}
with
\begin{equation*}
\Ix_0(x_1,\dots,x_{N_1},x) = \sum_{i=1}^{N_1}\left(\frac{1}{|re_1 + x-x_i|}-\frac{1}{|\di e_1 -x_i|} \right) -\frac{N_1}{|re_1 + x|} + \frac{N_1}{\di}.
\end{equation*}
Therefore, we find
\begin{equation*}
V(x_1,\dots,x_{N_1}) = \int\Ix_0\left(x_1,\dots,x_{N_1},x\right)\rho(x)dx,
\end{equation*}
where $\rho$ is the electron density of the second atom given by
\begin{equation*}\label{def:elden}
\rho(x) = \frac{1}{\dim(G_2)}\sum_{n=1}^{\dim(G_2)}\sum_{j=1}^{N_2}\int|\widetilde{\psi_n}|^2\left(x_{1},\dots,x_{j-1},x,x_{j+1},\dots,x_{N_2}\right)
\end{equation*}
where all the variables are integrated with the exception of $x$.
Note that $\widetilde{\psi_n}$ is an orthonormal basis of $G_{2,r}$. It is well known that $\rho(x)$ is spherically symmetric, see for example \cite{AS}. Thus applying Newton's theorem we find that $V(x_1,\dots,x_{N_1})$ vanishes on the support of $\Phi_m$, which together with \eqref{eq:Newt3} implies \eqref{eq:Newt2}. From \eqref{eq:Newt1} and \eqref{eq:Newt2}
we arrive at \eqref{eq:IPhipsi} and \eqref{eq:P0botIPhipsi} can be similarly proven. Thus \eqref{eq:prevdW} holds.
From \eqref{eq:prevdW}
we find
\begin{multline*}
	\sum_{m=1}^{\dim(G_2)}\langle \Phi_m\otimes\psi_{m}, K^*\left(\Hx-\Einfty\right)K\Phi_m\otimes\psi_{m}\rangle \\
	= -\sum_{m=1}^{\dim(G_2)} \frac{\langle f\Phi_m\otimes\psi_{m}, \Rx f \Phi_m\otimes\psi_{m}\rangle}{\di^6} + O\left(\frac{1}{\di^7}\right).
\end{multline*}
Since $\Psi$ given by \eqref{eq:defpsipsi0} minimizes the quadratic form of $\langle\Phi,\Hx\Phi\rangle$ for $\|\Phi\|= 1, \Phi\in\Ran \Pi$, we obtain with Lemma \ref{lem:PiHPi} and \eqref{sigmaijvdef} that
\begin{equation}\label{eq:fastvdW}
	\sum_{m=1}^{\dim(G_2)}\langle \Phi_m\otimes\psi_{m}, K^*\left(\Hx-\Einfty\right)K\Phi_m\otimes\psi_{m}\rangle 
 	 = -\frac{\sigma}{\di^6} + O\left(\frac{1}{\di^7}\right).
\end{equation}
With the help of Lemma \ref{lem:PiHPi} we see that $\|\Psi_0\|^2=\sum_{m=1}^{\dim(G_2)} \|K \Phi_m \otimes \psi_m\|^2$, as the the functions on the right hand side are orthogonal to each other. Thus since $\sum_{m=1}^{\dim(G_2)}\|\Phi_m\|^2=1$ we can prove  similarly as \eqref{est:psi0norm} that
\begin{equation}\label{eq:Psionorm} 
\|\Psi_0\|^2=1+O_d^\infty\left(\frac{1}{r^6}\right).
\end{equation}
Using \eqref{eq:interensum}, \eqref{eq:fastvdW} and \eqref{eq:Psionorm}, we arrive at \eqref{eq:new2}. 

It thus remains to show that $W(r)$ or equivalently $E(r)$ is strictly increasing for large $r$, which we do next.

\begin{lemma}[Monotonicity]
There exists a $d>0$ such that $E(\di)$ is strictly increasing on $[d,\infty)$.
\end{lemma}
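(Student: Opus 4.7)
The plan is to extend the Feshbach reduction from the proof of Theorem~\ref{thm:new2} into a fixed-point formulation for a finite-dimensional Hermitian matrix, and then run a quantitative variational perturbation argument at the level of the smallest eigenvalue. This avoids having to differentiate $W(r)$, whose regularity may fail at eigenvalue crossings of the matrix-valued Feshbach map. The global strategy follows the spirit of~\cite{AL}.

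Concretely, $E(r)$ equals the smallest eigenvalue of the Feshbach operator $F_{\Pi_r}(E(r))|_{\Ran \Pi_r}$. Setting
$$\tilde M(r,\lambda) := \left(\Pi_r \Hx \Pi_r - \Pi_r \Hx \Pi_r^\bot (\Hxabot-\lambda)^{-1}\Pi_r^\bot \Hx \Pi_r\right)\Big|_{\Ran \Pi_r},$$
one obtains the fixed-point relation $E(r)=\lambda_1(\tilde M(r,E(r)))$, where $\lambda_1$ denotes the lowest eigenvalue. The core analytic goal is to establish the operator estimates
$$\partial_r \tilde M(r,\lambda) \geq \frac{3\sigma}{r^7}\,\mathrm{Id}_{\Ran \Pi_r}, \qquad \left\|\partial_\lambda \tilde M(r,\lambda)\right\| \leq \frac{D}{r^8},$$
uniformly for $r$ large and $\lambda$ close to $E(r)$. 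The bound on $\partial_\lambda \tilde M = \Pi_r\Hx\Pi_r^\bot(\Hxabot-\lambda)^{-2}\Pi_r^\bot\Hx\Pi_r$ follows from $\|\Pi_r^\bot \Hx\Pi_r\|=O(1/r^4)$, which is itself a consequence of the bounds of type~\eqref{est:HxEpsix} already used in the proof of Theorem~\ref{thm:new2}. For the $r$-derivative, Lemma~\ref{lem:PiHPi} combined with Newton's theorem (as used to establish~\eqref{eq:IPhipsi}) shows that $\Pi_r\Hx\Pi_r = (E(\infty)-\sigma/r^6)\,\mathrm{Id} + R(r)$ on $\Ran \Pi_r$ with $\|R(r)\|=O(1/r^7)$; a refinement of the arguments in Section~\ref{sec:lin} then yields $\|R'(r)\|=O(1/r^8)$, while the $r$-derivative of the nonlinear Feshbach piece contributes only $O(1/r^9)$.

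Given these operator estimates, the Rayleigh--Ritz principle (applied to a minimizing eigenvector at $r_2$) gives, for $r_1<r_2$ both large and any fixed $\lambda$,
$$\lambda_1(\tilde M(r_2,\lambda))-\lambda_1(\tilde M(r_1,\lambda))\geq\int_{r_1}^{r_2}\frac{3\sigma}{s^7}\,ds =: \alpha(r_1,r_2)>0,$$
while Weyl's inequality yields $|\lambda_1(\tilde M(r,\lambda_2))-\lambda_1(\tilde M(r,\lambda_1))|\leq (D/r^8)|\lambda_2-\lambda_1|$. Writing $\Delta := E(r_2)-E(r_1)$ and telescoping,
$$\Delta = \left[\lambda_1(\tilde M(r_2,E(r_2)))-\lambda_1(\tilde M(r_1,E(r_2)))\right] + \left[\lambda_1(\tilde M(r_1,E(r_2)))-\lambda_1(\tilde M(r_1,E(r_1)))\right],$$
one obtains $\Delta \geq \alpha(r_1,r_2) - (D/r_1^8)|\Delta|$. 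A case split on the sign of $\Delta$, using $1-D/r_1^8>0$ for $r_1$ large enough, forces $\Delta>0$ and hence strict monotonicity.

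The main technical obstacle is the operator lower bound on $\partial_r(\Pi_r\Hx\Pi_r)$. Theorem~\ref{thm:new2} only provides the trace-like average of the diagonal matrix elements in one particular basis; the monotonicity proof requires the \emph{full} matrix-valued derivative to remain essentially a scalar on the $\dim(G_1)\cdot\dim(G_2)$-dimensional space $\Ran \Pi_r$, up to a relative error of size $1/r$. This upgrade relies on applying Lemma~\ref{lem:PiHPi} at the level of $\partial_r\Hx$ rather than $\Hx$ itself, combined with the differentiability estimates developed in Section~\ref{sec:proof} carried out for matrix elements of the Feshbach operator rather than for a scalar energy.
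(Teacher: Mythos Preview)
Your proposal has a genuine gap at the step where you claim $\Pi_r\Hx\Pi_r = (E(\infty)-\sigma/r^6)\,\mathrm{Id} + R(r)$ with $\|R(r)\|=O(1/r^7)$. Lemma~\ref{lem:PiHPi} does \emph{not} yield this. Under the hypotheses of Theorem~\ref{thm:new2} only one of the two ground state eigenspaces is irreducible (taken to be $G_2$), and Lemma~\ref{lem:PiHPi} then says that for fixed $\Phi_1,\Phi_2\in G_{1,r}$ the block $\langle\Phi_1\otimes\psi_{m_1},A\,\Phi_2\otimes\psi_{m_2}\rangle$ is a scalar in the $G_2$ index. It says nothing about the $G_1$ index. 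Concretely, the leading $1/r^6$ part of $\Pi_r\Hx\Pi_r-E(\infty)$ is $-r^{-6}\,M\otimes I_{G_2}$, where $M$ is the $\dim(G_1)\times\dim(G_1)$ Hermitian matrix $M_{n_1n_2}=\langle f\varphi_{n_1}\otimes\psi_m, R_{12}^\bot f\varphi_{n_2}\otimes\psi_m\rangle$ (independent of $m$ by Lemma~\ref{lem:PiHPi}). By definition $\sigma$ is the \emph{largest} eigenvalue of $M$, not its only one; the paper explicitly remarks that the Feshbach map ``is a matrix that is not necessarily a multiple of the identity if there are degeneracies.'' Consequently $\partial_r\tilde M = 6r^{-7}\,M\otimes I_{G_2}+O(1/r^8)$, and your operator inequality $\partial_r\tilde M\geq (3\sigma/r^7)\,\mathrm{Id}$ would require $M\geq(\sigma/2)\,\mathrm{Id}$, which is not available. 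Your Rayleigh--Ritz step then collapses, because the minimizing eigenvector at $r_2$ may lie in a direction where $M$ is small (a~priori even zero), and there is no established strict lower bound on the smallest eigenvalue of $M$.

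The paper's proof sidesteps this entirely by \emph{not} trying to control the full matrix. Instead it fixes $s$, takes the eigenvector $\Psi_s$ of the Feshbach map at $s$ (which by construction sits in the direction achieving the \emph{maximum} $\sigma$), translates it to a test function $\Psi_r$ at distance $r$, and forms the scalar $D(r)=\langle\Psi_r,\Hx_r\Psi_r\rangle-N(r)$ with $E(s)$ frozen in the resolvent. For this single test function the leading term is exactly $-\sigma/r^6$, so $D'(r)\sim 6\sigma/r^7>0$. The contradiction then comes from the variational inequality $D(r)\geq E(r)$ (since $\Psi_r\in\Ran\Pi_r$ and $E(s)\leq E(r)$ would imply the resolvent comparison~\eqref{ineq:Resolvent}) together with $D(s)=E(s)$. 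The point is that choosing the test function to be the minimizer picks out the top eigenvalue $\sigma$ of $M$, which is known to be strictly positive, whereas your operator approach would need the bottom eigenvalue of $M$.
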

\begin{proof} 
	Here we adapt ideas of Section 3 in \cite{AL}. There the situation was  different, a system of two molecules that can be rotated was considered, but it was assumed that both of them have irreducible ground state eigenspaces.
	Let $\Psi_{s}=\frac{Q\Psi_0}{\|Q\Psi_0\|}$, be an eigenvector to the eigenvalue $E(s)$ of the Feshbach map $F_\Pi(E(s))$ for distance $s$, with $\Psi_0$ defined  similarly as in \eqref{eq:defpsipsi0}. Let
	\begin{equation*}
	\Psi_{\di} = \frac{Q\tau_{\di-s}\Psi_0}{\|Q\tau_{\di-s}\Psi_0\|},
	\end{equation*}
	where  $\tau$ is given by \eqref{def:tau} and define
\begin{equation}\label{eq:Drdec}
D(\di) := \langle\Psi_{\di}, \Hx_{\di}\Psi_{\di}\rangle - N(\di),
\end{equation}
where
\begin{equation*}
N(\di) := \langle\Psi_{\di}, \Hx_{\di} \Pi_r^\bot(\Hxabot_{\di}-E(s))^{-1} \Pi_r^\bot\Hx_{\di}\Psi_{\di}\rangle
\end{equation*}
and we wrote $\Pi_\di$ to emphasize the dependence of the projection $\Pi$ on $\di$.  
Note that $\Psi_\di$ is in the range of the projection $\Pi_\di$ for the Feshbach map but it is not necessarily eigenfunction to the eigenvalue $E(\di)$ of the Feshbach map $F_{\Pi_{\di}}(\di)$. However, it can be used as a test function. We also emphasize the fact that $E(s)$ does not change in the definition of $N(\di)$ when $\di$ changes, which helps a lot the analysis. Arguing as in \eqref{eq:PHP} and \eqref{eq:PHE0P} we find 
that
\begin{equation*}
\langle\Psi_{\di}, \Hx_{\di}\Psi_{\di}\rangle=\frac{1}{\|\Psi_0\|^2}\langle\widetilde{\Psi_0}, \widetilde{\Hx_{\di}} \widetilde{\Psi_0} \rangle,
\end{equation*}
which with the help of \eqref{eq:defpsipsi0} and Lemma  \ref{lem:PiHPi} 
becomes 
\begin{equation*}
\langle\Psi_{\di}, \Hx_{\di}\Psi_{\di}\rangle=\frac{1}{\|\Psi_0\|^2}\sum_{m=1}^{\dim(G_2)}\langle \Phi_m \otimes \widetilde{\psi_m}, \widetilde{K^*} \widetilde{\Hx_{\di}}
 \widetilde{K}  \Phi_m \otimes \widetilde{\psi_m} \rangle,
\end{equation*}
where $\widetilde{K}=1-\widetilde{\chi} \widetilde{R} \widetilde{I_r}$.
Thus, since $\sum_{m=1}^{\dim(G_2)} \|\Phi_m\|^2=1$ and \eqref{eq:Psionorm} holds, 
we  can argue as in the proof of \eqref{eq:psiHpsi} to show that
\begin{equation}\label{eq:psiHpsideg}
\langle\Psi_{\di}, \Hx_{\di}\Psi_{\di}\rangle = \Einfty -\frac{\sigma}{\di^6}+\Odi\left(\frac{1}{\di^7}\right).
\end{equation}
To estimate $N(r)$ we observe that 
\begin{equation*}\label{eq:Nr}
N(\di) = \langle \Pi_r^\bot (\Hx_{\di}-E(\infty)) \Psi_{\di},  (\Hxabot_{\di}-E(s))^{-1} \Pi_r^\bot(\Hx_{\di}-E(\infty))\Psi_{\di}\rangle.
\end{equation*}
Since by \eqref{eq:defpsipsi0} it follows that 
$$(\Hx_{\di}-E(\infty)) \Psi_{\di}=\sum_{m=1}^{\dim(G_2)} \frac{Q}{\|Q \Psi_0\|} (H-E(\infty)) K \Phi_m \otimes \psi_m,$$ 
using 
\eqref{eq:Psionorm}  and arguing for each of the summands as in the proof of \eqref{est:QHxpsixderalpha} we find
\begin{equation*}\label{eq:HminusEpsider}
\frac{d^a}{dr^a} (\Hx_{\di}-E(\infty)) \Psi_{\di}=O\left(\frac{1}{\di^4}\right)
\end{equation*}
for $a=0,1$.
Arguing as in the proofs of \eqref{est:Qpsix} and \eqref{est:Resolder} we find, respectively, that
\begin{equation*}\label{eq:Pirder}
\frac{d}{dr}\Pi_\di^\bot=O(1)
\end{equation*}
and 
\begin{equation*}\label{eq:resder}
\left[\frac{d}{dr}(\Hxabot_{\di}-E(s))^{-1}\right] \cQ= O(1).
\end{equation*}
We arrive at
\begin{equation}\label{eq:derN}
\frac{d^a}{dr^a}N= O\left(\frac{1}{\di^8}\right)
\end{equation}
for $a=0,1$.
	It thus   follows from \eqref{eq:Drdec}, \eqref{eq:psiHpsideg} and \eqref{eq:derN} that $D(r)$ is strictly increasing, when $r$ is big enough. From this it follows that $E(r)$ is strictly increasing for $r$ large enough.
Indeed, assume that $s,r$ are large enough with $r<s$ and  
	\begin{equation}\label{assumptionE}
		E(\di)\geq E(s).
	\end{equation}
	From this assumption it follows, with the help of the second resolvent formula and \eqref{eq:IMScor2}, that
	\begin{equation}
		-(\Hx_{\di}^\bot-E(s))^{-1} \geq -(\Hx_{\di}^\bot-E(\di))^{-1}. \label{ineq:Resolvent}
	\end{equation}
	Using \eqref{ineq:Resolvent} we observe that
	\begin{align}\nonumber
		D(\di) &= \langle\Psi_{\di}, \Hx_{\di}\Psi_{\di}\rangle - \langle\Psi_{\di}, \Hx_{\di}\Pi_{r}^\bot(\Hx_{\di}^\bot-E(s))^{-1}\Pi_{r}^\bot\Hx_{\di}\Psi_{\di}\rangle
		\\ \label{eq:mon} &\geq  \langle\Psi_{\di}, \Hx_{\di}\Psi_{\di}\rangle - \langle\Psi_{\di}, \Hx_{\di}\Pi_{r}^\bot(\Hx_{\di}^\bot-E(\di))^{-1}\Pi_{r}^\bot\Hx_{\di}\Psi_{\di}\rangle \geq E(r),
	\end{align}
	where the last inequality follows from the fact that  $E(r)$ is the minimum of the spectrum of the Feshbach map $F_{\Pi_\di}(E(r))$. 
	Since moreover $E(s) = D(s)$ it follows from \eqref{assumptionE} and \eqref{eq:mon} that 
	\begin{equation*}
		D(\di) \geq  D(s),
	\end{equation*}
	contradicting the fact that $D$ is strictly increasing for large $r$. 
\end{proof}

\end{document}